\renewcommand{\P}{\mathbb{P}}
\newcommand{\E}{\mathbb{E}}
\newcommand{\lbs}{\lambda_{BS}}
\newcommand{\lu}{\lambda_U}
\newcommand{\dr}{\text{d}r}
\newcommand{\dt}{\text{d}t}
\newcommand{\dx}{\text{d}x}
\newcommand{\dy}{\text{d}y}
\newcommand{\ds}{\text{d}s}
\newcommand{\ruimte}{\IEEEeqnarraynumspace}
\newcommand{\wtot}{W_{tot}}
\newcommand{\wtotarea}{\frac{\overline{\wtot}}{\lbs}}
\newcommand{\rlos}{r_{LoS}}
\newtheorem{theorem}{Theorem}
\begin{document}
\title{Impact of Multi-connectivity on Channel Capacity and Outage Probability in Wireless Networks}
%
%
%

\author{Lotte~Weedage, 
        Clara~Stegehuis, and 
        Suzan~Bayhan 
\thanks{Authors are with the Faculty
of Electrical Engineering, Mathematics and Computer Science~(EEMCS), University of Twente, The Netherlands,
Corresponding author's e-mail: l.weedage@utwente.nl.}
}

\maketitle

\begin{abstract}
Multi-connectivity facilitates higher throughput, shorter delay, and lower outage probability for a user in a wireless network. Considering these promises, a rationale policy for a network operator would be to implement multi-connectivity for all of its users. 
In this paper, we investigate whether the promises of multi-connectivity also hold in such a setting where all users of a network are connected through multiple links. In particular, we consider a network where every user connects to its $k$ closest base stations. 
Using a framework of stochastic geometry and probability theory, we obtain analytic expressions for per-user throughput and outage probability of $k$-connectivity networks under several failure models. 
In contrast to the conclusions of previous research
, our analysis shows that per-user throughput decreases with increasing $k$. However, multi-connected networks are more resilient against failures than single connected networks as reflected with lower outage probability and lead to higher fairness among the users. Consequently, we conclude that rather than implementing multi-connectivity for all users, a network operator should consider it for its users who would benefit from additional links the most, e.g., cell edge users. 
\end{abstract}

\begin{IEEEkeywords}
Multi-connectivity, capacity, Poisson Point Process, outage probability.
\end{IEEEkeywords}

%
\IEEEpeerreviewmaketitle

\section{Introduction}\label{sec:introduction}


\textit{Multi-connectivity}~(MC) refers to a setting in which a user is simultaneously connected to a network through multiple connections, e.g., base stations~(BS). As the traffic can flow through multiple paths, a user can enjoy higher data rates, higher reliability, or lower link delays, which are essential for emerging applications such as augmented-reality, which requires a high amount of bandwidth, or mission-critical applications demanding ultra-reliable and low latency communication~(URLLC) \cite{Simsek20165G-EnabledInternet}. In addition, MC can increase the resilience of a network by exploiting spatial or/and frequency diversity. For instance, a user can be connected to a millimeter wave~(mmWave) network which provides an ample bandwidth while a second link at sub-6GHz bands can ensure a stable connection in case the mmWave link is disrupted due to an object blocking the line-of-sight~(LoS) mmWave connection. Regarding spatial diversity, a user can be connected to multiple BSs which are at different locations thereby improving communication quality, e.g., via decreasing the likelihood of correlated large-scale fading or increasing the robustness against network failures~\cite{Wolf2019HowMulti-Connectivity}. 

Depending on the primary goal, one of the following approaches can be used to realize MC: (i) load balancing, (ii) packet splitting, and (iii) packet duplication~\cite{Suer2020Multi-ConnectivityOverview}. In \textit{load balancing}, packets are distributed among  different connections, which will decrease the latency in comparison to \textit{single connectivity}~(SC) networks. 
Similar to load balancing, \textit{packet splitting} distributes traffic over multiple links, but at the packet level. That is, every packet is split into multiple parts and then every part is sent over a different link. 
Consequently, packet splitting will reduce the latency, but will not improve reliability.
When the primary goal is to increase the reliability as in mission-critical applications, \textit{packet duplication} based MC suggests transmitting each packet over every link so that the likelihood of failure is decreased at the expense of capacity.

While the promise of MC over SC is demonstrated in the prior works such as \cite{petrov2017dynamic}, a key question remains to find the optimal \textit{degree} of MC, i.e., the number of connections per user, so that the throughput and network reliability is high.
%
Since MC results in various complexities such as signalling overhead among multiple network nodes, traffic scheduling or combining at the receiver~\cite{Wolf2019HowMulti-Connectivity}, it is desirable to find the degree of MC that achieves a significant improvement over a one-lower connectivity.   
To find this optimal degree, one can use different measures, such as the \textit{spectral efficiency}, \textit{outage probability}, and \textit{channel capacity} of the network.
In \cite{Gapeyenko2019OnDeployments}, the authors find the optimal number of connections to vary between 2 and 4 in mmWave networks, based on the outage probability for the cell-edge users and the spectral efficiency of the network. 
%
In this paper,  we are interested in throughput of a randomly-selected user (not necessarily at the cell-edge) as it gives more insight on whether an application's rate requirements can be met. When focusing on the perspective of a user, we consider per-user throughput as our optimality metric. Moreover, we focus on multi-connectivity in the form of load balancing: users are served through every connection they have without redundancy.

Given that another motivation for MC is to increase reliability, we investigate the gain facilitated by MC in the existence of failures in the network. These failures can be due to the loss of line of sight~(LoS) links as experienced in mmWave 
networks~\cite{Gapeyenko2019OnDeployments} or failures caused by wireless channel impairments and overloading of BSs. 
We will explore the impact of MC on the expected channel capacity in the face of failures as well as the outage probability which quantifies the probability that a user is disconnected from the network. 
More precisely, we will address the following questions by providing an analytical framework based on stochastic geometry and probability theory:
\begin{itemize}
    \item How does multi-connectivity affect per-user throughput and outage probability of a cellular network when all links are reliable?
    \item How does multi-connectivity affect per-user throughput and outage probability in case of various link and BS failures~(e.g., random)?
\end{itemize}

Interestingly, our analysis shows that going from single to multi-connectivity decreases  per-user throughput if all users are multi-connected with equal number of connections. Thus, while MC may increase spectral efficiency~\cite{Gapeyenko2019OnDeployments}, the fact that BSs share their resources among more users that are possibly further away makes MC decrease per-user throughput. Hence, a key take-away of our paper is that simply increasing the number of connections of a user may not improve user throughput and MC should be leveraged for users who would benefit from it the most such as for cell edge users or users with higher rate or reliability requirements. Another observation is that higher degrees of MC increase fairness among users and that MC significantly boosts the network outages under failures.

In the following sections, we elaborate on the models for MC~(Section~\ref{sec:system_model}) and we analytically find the expectation of the channel capacity of a user under MC~(Section~\ref{sec:capacity}). Then, we investigate the outage probability under different failure models~(Section~\ref{sec:link_failures}), which we can use to find the expected channel capacity after failures. We present a performance analysis of the considered models in Section~\ref{sec:perf_analysis} to investigate the impact of MC for wireless networks. Finally, we conclude the paper in Section~\ref{sec:conclusion} with some future research directions.  

\section{Related Work}\label{sec:related}

In this section, we overview the three lines of related work: impact of MC on the network performance, impact of failures on MC, and scheduling schemes for MC.  For further reading of the state-of-the art on MC, we refer the reader to \cite{Suer2020Multi-ConnectivityOverview}.



\textit{Performance and reliability}: The most important aspect of MC in this paper is the performance and reliability of the network under MC. The former is typically measured in terms of network throughput whereas the reliability is usually measured in terms of outage probability. 
In a scenario without mobility, most research shows that while MC always increases outage probability, the per-user throughput may decrease \cite{ghatak2020elastic, sharma2020statistical}. This is sometimes referred to as the rate-reliability trade-off \cite{wolf2018rate}. In these papers, the outage probability is defined as the probability that the SNR of a user is below a certain threshold. However, these papers use a dynamic number of links per user, which makes analytical expressions for coverage probability and throughput intractable. Therefore, these papers only provide analytical results for 1- and 2-connectivity, and present an analysis of the impact of higher degrees of connectivity via simulations. In an urban scenario, with line-of-sight blockages of links, MC increases spectral efficiency and reliability \cite{Gapeyenko2019OnDeployments}. However, the authors only focus on a single user, and do not investigate the per-user throughput of MC, which we analyze in this work.

\textit{Impact of failures:} Little research has been done on the impact of \textit{link failures} on multi-connected networks, where links can fail due to other causes than only low S(I)NR. In \cite{Gapeyenko2019OnDeployments}, the authors show  how the loss of line-of-sight affects the network throughput. Similarly, it has been shown that in ultra-dense urban scenarios \cite{petrov2017dynamic, perdomo2020user, gerasimenko2019capacity} and indoor scenarios \cite{pirmagomedov2019performance}, MC increases the per-user throughput. Our work differs from these papers in the sense that we consider not only loss of line-of-sight, but a more general class of link failures, based on for example distance or number of links to a certain BS.

\textit{Scheduling schemes}: As discussed before, the main scheduling approaches to achieve  MC are load balancing, packet duplication, or packet splitting. 
In \cite{suer2020evaluation, suer2020impact, suer2020reliability}, the authors investigate the impact of these three methods and show that for different configurations of the network, different scheduling schemes work best. Specifically, for a network that is close to its capacity limit, load balancing provides the lowest latency and highest reliability among all three methods. 
%
In this paper, we will assume load balancing based MC. However, our analytical modelling approach can easily be extended to packet splitting and packet duplication based MC. 

Summarizing, our main contributions are threefold:
\begin{itemize}
\item We provide an analytical framework to analyze $k$-connected MC networks for arbitrary $k$. 
\item We show that contrary to the single-user perspective, MC decreases per-user throughput. 
\item The framework enables to analyze the outage probability for a more general class of failure models, including low SNR failureas and line-of sight failures. We quantify to what extent MC increases network reliability under several types of failure models.
\end{itemize}

\section{System Model}\label{sec:system_model}
We consider a 5G cellular network where base stations~(BS) and the users~(U) are distributed following  a homogeneous Poisson Point Process~(PPP)~\cite{ElSawy2017ModelingTutorial}. 
Let us denote the corresponding parameters $\lbs$ and $\lu$ for the BSs and the users, respectively. 
As BS association scheme, we assume that each user in this network connects to exactly $k$ BSs which provide the highest signal-to-noise-ratio~(SNR). We refer to this MC setting as \textit{k-connectivity}. For example, Figure~\ref{fig:2_connectivity_example} depicts a 2-connectivity scenario.

The network operator owns a spectrum of $\wtot$ Hz and allocates this spectrum to the BSs such that neighboring BSs result in negligible interference to each other. 
Each BS is allocated equal amount of spectrum denoted by $W_{BS}$ and transmits with a power level of $P_{tx}$ Watts. 
We will denote the number of users connected to BS$_j$ by $D_{BS_j}$ and refer to it as the degree of this BS. Similarly, we will denote the number of connections of a user by $D_U$. Note that when $D_U=0$, the user is under outage. We will denote the outage probability by $\P(D_U = 0)$.

We denote the path loss exponent with $\alpha$ and the noise power with $\sigma$. Assuming that BSs are assigned an orthogonal frequency, interference from the other BSs is negligible. This implies that under k-connectivity, each user connects to the $k$ closest BSs. 

We will denote the distance of a randomly-selected user to its $j^{\textrm{th}}$ closest BS~(BS$_j$) by $R_j$ and denote the SNR from BS$_j$ by $SNR_j$. Consequently, we denote the channel capacity of the link between a user and BS$_j$ by $C_j$. 
We assume a load-balancing MC network in which downlink traffic for each user is distributed over multiple links~\cite{Suer2020Multi-ConnectivityOverview}. 
As a result, we define the channel capacity $C^k_{sum}$ as the sum of all channel capacities that a user has from its links with BS$_j, j \in [k]$ in $k$-connectivity. 

Let $N_{BS}$ denote the number of BSs within a circle with radius $r$. Similarly, let $N_{U}$ denote the number of users within a circular area with radius $r$. 
With the PPPs for the spatial distributions of BSs and users, we can calculate the probability of $n$ BSs/users in the considered circular area as follows:
\begin{IEEEeqnarray}{rCl}
    P(N_{BS}(\pi r^2) = n) &=& \frac{(\lbs \pi r^2)^n}{n!}e^{-\lbs \pi r^2}, \label{eq:number_of_basestations}\\
    P(N_{U}(\pi r^2) = n) &=& \frac{(\lu \pi r^2)^n}{n!}e^{-\lu \pi r^2}.\label{eq:number_of_users} 
\end{IEEEeqnarray}

\begin{figure}
    \centering
    \resizebox{0.4\textwidth}{!}{%
        \begin{tikzpicture}

        \node[inner sep = -2pt] (BS1) at (1,0.5) {\includegraphics[scale = 0.05]{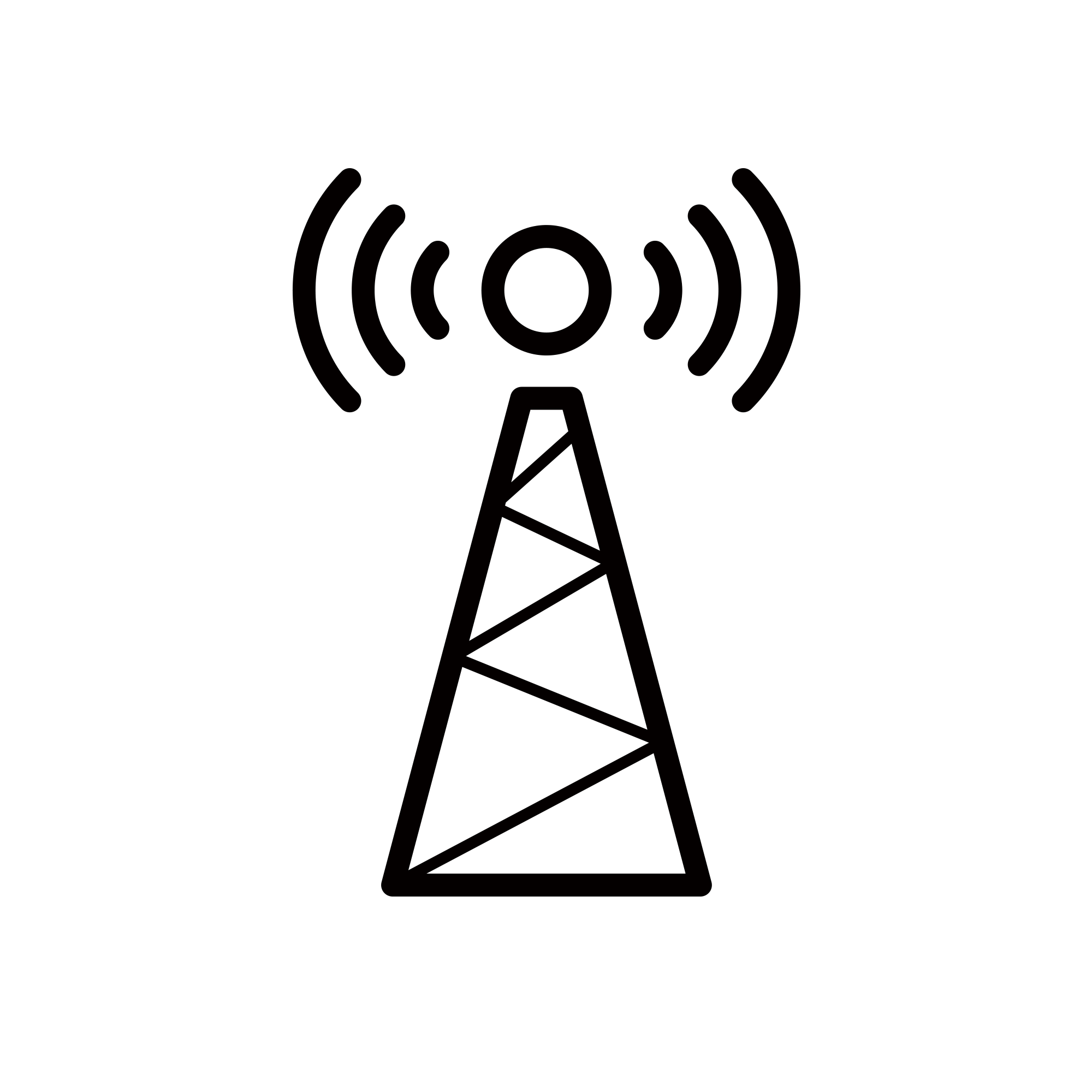}};
        \node[inner sep = -2pt] (BS2) at (-2.5,3) {\includegraphics[scale = 0.05]{base_station_icon.png}};
        \node[inner sep = -2pt] (BS3) at (3,-1.5) {\includegraphics[scale = 0.05]{base_station_icon.png}};
        \node[inner sep = -2pt] (BS4) at (-2,-2) {\includegraphics[scale = 0.05]{base_station_icon.png}};

        \node[shape = circle, fill, color = green, scale =0.5] (U1) at (-2,1.5) {};
        \node[shape = circle, fill, color = green, scale =0.5]  (U2) at (-1,2) {};
        \node[shape = circle, fill, color = green, scale =0.5] (U3) at (2,2.5) {};
        \node[shape = circle, fill, color = green, scale =0.5]  (U4) at (0,3.5) {};
        \node[shape = circle, fill, color = green, scale =0.5] (U5) at (-2.5,0.5) {};
        \node[shape = circle, fill, color = green, scale =0.5] (U6) at (-1.5,-1) {};
        \node[shape = circle, fill, color = green, scale =0.5] (U7) at (2,-3) {};
        \node[shape = circle, fill, color = green, scale =0.5]  (U8) at (-1.5,-3) {};
        \node[shape = circle, fill, color = green, scale =0.5] (U9) at (2,-1) {};
        \node[shape = circle, fill, color = green, scale =0.5] (U10) at (4,-2) {};
        \node[shape = circle, fill, color = green, scale =0.5] (U11) at (2,0) {};
        \node[shape = circle, fill, color = green, scale =0.5] (U12) at (3.5,0.5) {};
        \node[shape = circle, fill, color = green, scale =0.5] (U13) at (-3.5,-1.5) {};
        \node[shape = circle, fill, color = green, scale =0.5] (U14) at (-4,-0.5) {};
        \node[shape = circle, fill, color = green, scale =0.5] (U15) at (4.5,1.5) {};
        \node[shape = circle, fill, color = green, scale =0.5] (U16) at (-3.5,3) {};
        \node[shape = circle, fill, color = green, scale =0.5] (U17) at (-1,0.5) {};
        \node[shape = circle, fill, color = green, scale =0.5] (U18) at (-0.5,0) {};
        \node[shape = circle, fill, color = green, scale =0.5] (U19) at (1,-2) {};
        
        \draw[color = lightgray] (U16) -- (BS2);
        \draw[color = lightgray] (U1) -- (BS2);
        \draw[color = lightgray] (U2) -- (BS2);
        \draw[color = green, thick] (U4) -- (BS2);
        \draw[color = lightgray] (U5) -- (BS2);
        
        \draw[color = lightgray] (U14) -- (BS4);
        \draw[color = lightgray] (U13) -- (BS4);
        \draw[color = lightgray] (U8) -- (BS4);
        \draw[color = lightgray] (U6) -- (BS4);
        
        \draw[color = lightgray] (U18) -- (BS1);
        \draw[color = lightgray] (U17) -- (BS1);
        \draw[color = lightgray] (U11) -- (BS1);
        \draw[color = lightgray] (U3) -- (BS1);
        
        \draw[color = lightgray] (U15) -- (BS3);
        \draw[color = lightgray] (U12) -- (BS3);
        \draw[color = lightgray] (U9) -- (BS3);
        \draw[color = lightgray] (U7) -- (BS3);
        \draw[color = lightgray] (U10) -- (BS3);
        \draw[color = lightgray] (U19) -- (BS3);
        
        \draw [color = lightgray](U13) -- (BS2);
        \draw [color = lightgray](U14) -- (BS2);
        \draw [color = lightgray](U17) -- (BS2);
        
        \draw [color = lightgray](U5) -- (BS4);
        \draw [color = lightgray](U18) -- (BS4);
        \draw [color = lightgray](U19) -- (BS1);
        
        \draw [color = lightgray](U1) -- (BS1);
        \draw [color = lightgray](U16) -- (BS1);
        \draw [color = green, thick](U4) -- (BS1);
        \draw [color = lightgray](U2) -- (BS1);
        \draw [color = lightgray](U9) -- (BS1);
        \draw [color = lightgray](U6) -- (BS1);
        \draw [color = lightgray](U8) -- (BS1);
        \draw [color = lightgray](U12) -- (BS1);
        \draw [color = lightgray](U15) -- (BS1);
        \draw [color = lightgray](U10) -- (BS1);
        \draw [color = lightgray](U7) -- (BS1);
        
        \draw [color = lightgray](U3) -- (BS3);
        \draw [color = lightgray](U11) -- (BS3);
        \node at (-1.5,3.5) {$R_1$};
        \node at (0.6,2.6) {$R_2$};
        
        \end{tikzpicture}%
        }
        \caption{Users connect to the two nearest base stations.}
    \label{fig:2_connectivity_example}
\end{figure}

\section{Capacity Analysis of Multi-connectivity}\label{sec:capacity}
In this section, we derive the expected channel capacity of a user denoted by ~$C^k_{sum}$. 

\subsection{Channel capacity under k-connectivity}
The Shannon channel capacity $C_j$ \cite{shannon2001mathematical} of a user that is connected to BS$_j$ depends on the bandwidth $W$ that BS$_j$ can allocate to this user and the SNR$_j$ at the considered user. More formally, $ C_{j}$ is defined as:
\begin{IEEEeqnarray}{rCl}
    C_{j} &=& {W}\log_2\left(1+SNR_{j}\right).\label{eq:Cj}
\end{IEEEeqnarray}

Given the degree of BS$_j$ and assuming that all users are allocated an identical amount of bandwidth, we can calculate $W$ simply as follows: 
\begin{align}
    W = \frac{W_{BS}}{D_{BS_j}}.\label{eq:W_j}
\end{align} 

To find the bandwidth $W_{BS}$ allocated to each BS, we need to know the number of BSs deployed in the coverage area~($A$) of the operator. Then, we derive $W_{BS}$ as:
\begin{IEEEeqnarray}{rCl}
    W_{BS} &= \frac{\wtot}{\lbs \cdot A} = \frac{\overline{W_{tot}}}{\lbs}, \label{eq:W_BS_j}
\end{IEEEeqnarray}
where $\overline{\wtot}$ is the total bandwidth per area.

We can calculate $SNR_j$ as follows:
\begin{IEEEeqnarray}{rCl}
    SNR_{j} = \frac{P_{tx} R_{j}^{-\alpha}}{\sigma} = \begin{cases} c R_{j}^{-\alpha} \hspace{1cm} R_{j} \geq 1, \\ c \hfill R_{j} < 1,\end{cases} \label{eq:definition_snr}
\end{IEEEeqnarray}
for $c = \frac{P_{tx}}{\sigma}$. Inserting (\ref{eq:W_j}) and (\ref{eq:W_BS_j}) into (\ref{eq:Cj}), we obtain $C_j$ as:
\begin{IEEEeqnarray}{rCl}
    C_{j} &= \frac{\overline{\wtot}}{\lbs D_{BS_j}}\log_2\left(1+SNR_{j}\right) \label{eq:channel_capacity}.
\end{IEEEeqnarray}

Consequently, we define $C^k_{sum}$ as the sum of all channel capacities provided by all $k$ links of this user:
\begin{IEEEeqnarray}{rCl}
    C^k_{sum} &=& \sum_{j=1}^k \frac{\overline{\wtot}}{\lbs D_{BS_j}}  \log_2(1+SNR_{j}) = \frac{\overline{\wtot}}{\lbs D_{BS}} \sum_{j=1}^k  \log_2(1+SNR_{j}) \label{eq:sum_channel_capacity},
\end{IEEEeqnarray}
as we assume $D_{BS}$ is independent of $j$ under the assumption that users are  distributed following a homogeneous PPP.


To calculate $C^k_{sum}$ in (\ref{eq:sum_channel_capacity}), we will derive the expected degree of a BS~($D_{BS}$), the distribution of SNR, and finally the expectation of $\log_2(1+SNR)$ in the following sections.

\subsection{Expected degree of base station}
We can approximate the degree distribution of a BS by finding the area in which users connect to this BS in $k-$connectivity, which has been derived in \cite{stegehuis2021degree}:
\begin{IEEEeqnarray}{rCl}
    \P(D_{BS} = n) &=& \frac{\Gamma(n + a_k)}{\Gamma(n+1)\Gamma(a_k)} \frac{a_k^{a_k} (k \lambda)^n}{(k\lambda + a_k)^{a_k {+} n}} ,\IEEEeqnarraynumspace\label{eq:degree_distribution}
\end{IEEEeqnarray}
for $a_1 = 3.5, a_2 = 7.2, a_3 = 11.1, a_4 = 15.2$ and $a_5 = 21.2$ and where $\lambda = \frac{\lu}{\lbs}$.

To obtain the expectation of 
$\frac{\overline{\wtot}}{\lbs D_{BS}}$ in (\ref{eq:sum_channel_capacity}), we use the size-biased distribution of $D_{BS}$, which is the distribution of the degree of a BS to which a randomly chosen user connects:
\begin{IEEEeqnarray}{rCl}
    \E\left(\frac{\overline{\wtot}}{\lbs D^*_{BS}}\right) &=& \frac{\overline{\wtot}}{\lbs} \sum_{n=1}^\infty \frac{1}{n}\P(D^*_{BS} = n)\nonumber \\
    &=& \wtotarea \sum_{n=1}^\infty \frac{\P(D_{BS} = n)}{\E(D_{BS})} \nonumber \\
    &=& \wtotarea \frac{1}{k\lambda} = \frac{\overline{\wtot}}{k \lu}. \label{eq:expectation_w_general}
\end{IEEEeqnarray}

\subsection{Probability distribution of SNR}
Since the SNR depends on the distance between a user and the BS in consideration, we need to find the distance distribution from a randomly chosen user to the $j^{\textrm{th}}$ closest BS for all $j \in [k]$. 
 We derive the probability distribution of $R_j$, which is defined as the distance to the $j^{\textrm{th}}$ closest BS~(Fig.~\ref{fig:2_connectivity_example}). 
 To derive this distribution, we use the fact that the $j^{\textrm{th}}$ closest BS is at a distance larger than $r$ if and only if there are less than $j$ BSs within the surrounding area $\pi r^2$ of the considered user. Formally, we state this probability as follows:
\begin{IEEEeqnarray}{rCl}
    \P(R_j > r) & =& \P(N_{BS}(\pi r^2) \leq j-1) \nonumber \\
    &=& \sum_{i = 0}^{j-1} \frac{(\lbs \pi r^2)^{i}}{\Gamma(i+1)}e^{-\lbs \pi r^2} \nonumber \\
    &=& \frac{\Gamma(j, \lbs \pi r^2)}{\Gamma(j)}, \label{eq:distance_probability}
\end{IEEEeqnarray}
where we used \eqref{eq:number_of_basestations} and $\Gamma(z, x) = \int_x^\infty t^{z-1}e^{-t}\dt$ is the upper incomplete gamma function. Therefore, we can write the distribution of $R_j$ as:
\begin{IEEEeqnarray}{rCl}
    f_{R_j}(r) &= \frac{2\left(\lbs \pi r^2\right)^j}{r\Gamma(j + 1)}e^{-\lbs\pi r^2}. \label{eq:distance_pdf}\nonumber
\end{IEEEeqnarray}
 With this distance distribution, we can find the probability distribution of the SNR of the $j^{\textrm{th}}$ closest BS:
\begin{IEEEeqnarray}{rCl}
    \P\left(SNR_j \leq x \right) &=  \P\left(R_j \geq \left(\frac{c}{x}\right)^{\frac{1}{\alpha}}\right) =  \frac{\Gamma\left(j, \phi x^{-\frac{2}{\alpha}}\right)}{\Gamma(j)},\IEEEeqnarraynumspace\nonumber
\end{IEEEeqnarray}
where $\phi = \lbs \pi c^{\frac{2}{\alpha}}$ and therefore, 
\begin{IEEEeqnarray}{rCl}
    f_{SNR_j}(x) \!&=\! \frac{d}{dx}\!\left[\frac{\Gamma(j, \phi x^{-\frac{2}{\alpha}})}{\Gamma(j + 1)}\right]\! =\! \frac{2\left(\phi x^{-\frac{2}{\alpha}}\right)^j}{\alpha x\Gamma(j)}e^{-\phi x^{-\frac{2}{\alpha}}}.\IEEEeqnarraynumspace\nonumber
\end{IEEEeqnarray}
Since the SNR cannot become larger than $c$, the probability distribution of the SNR becomes:
\begin{IEEEeqnarray}{rCl}
f_{SNR_j}(x) =
    \begin{cases}
    \frac{2\left(\phi x^{-\frac{2}{\alpha}}\right)^j}{\alpha x\Gamma(j)}e^{-\phi x^{-\frac{2}{\alpha}}}, \hspace{2cm} x < c\\
    \P(R_j < 1) = 1 - \frac{\Gamma(j, \lbs\pi)}{\Gamma(j)}, \hfill x = c\\
    0 \hfill x > c.
    \end{cases}\label{eq:snr_pdf}
\end{IEEEeqnarray}

\begin{figure}[th]
\centering
    \includegraphics[width=0.4\textwidth]{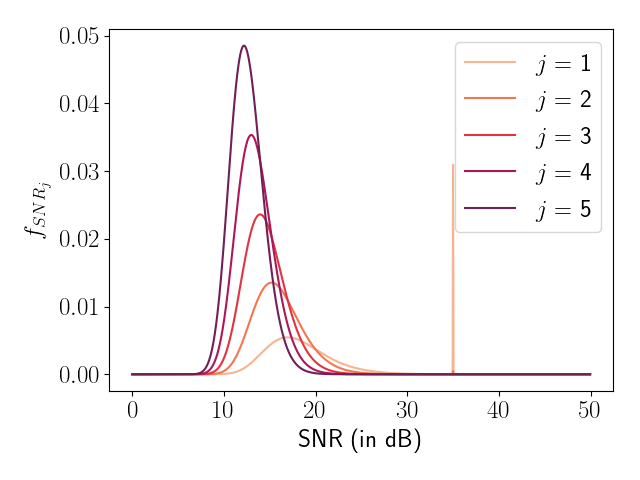}  
    \caption{Probability density function of $SNR_j$ for different values of $j$, with $\lbs = 10^{-2}$, $c = 10^{3.5}$ and $\alpha = 2$.}
    \label{fig:SNR_pdf}
\end{figure}

Fig.~\ref{fig:SNR_pdf} depicts $f_{SNR_j}$ under different values of $j$. As the figure shows, the SNR decreases when BSs are further away (larger $j$), and the SNR is cut off at $c = 10^{3.5} = 35$ dB, as users closer than $1$m from the BS will have power/noise ratio according to (\ref{eq:definition_snr}).

\subsection{Expectation logarithm of $SNR_j$}
To find the expected channel capacity, we need to find the expectation of $\log_2(1+SNR_j)$ for $j \in [k]$. We define this expectation in Theorem \ref{thm:expectation_log_snr} and provide an approximation of it that holds in high SNR regime, i.e., when the BS transmission power $P_{tx}$ is large and consequently $c$ is large.
\begin{theorem}\label{thm:expectation_log_snr}
The expectation of $\log_2(1 + SNR_j)$ 
is defined as:
\begin{IEEEeqnarray}{rCl}
    \E(\log_2(1+SNR_j)) &=& \frac{1}{\ln(2)\Gamma(j)} G(j, \phi) +\log_2(1+c) \left(1 - \frac{\Gamma(j, \lbs \pi)}{\Gamma(j)}\right) \nonumber \\
    &&+ \frac{\alpha}{2 \ln(2) \Gamma(j)}\bigg(\ln(\phi)\left(\Gamma(j, \lbs \pi) - \Gamma(j, \phi)\right) - \frac{\text{d}}{\text{d}a} \left[ \Gamma(a, \lbs \pi) - \Gamma(a, \phi)\right]_{a = j} \bigg),\label{eq:SNR_theorem}\IEEEeqnarraynumspace
\end{IEEEeqnarray}
where $\phi = \lbs \pi c^{\frac{2}{\alpha}}$, and $G(j, \phi)$ is defined as follows:
    \begin{IEEEeqnarray}{rCll}
    G(j, \phi) &=& \sum_{i=0}^\infty \frac{(-1)^i\phi^{\frac{\alpha}{2}(i+1)}}{i + 1} \Gamma\left(\frac{-\alpha}{2}(i{+}1){+}j, \phi\right)\nonumber \\
    &&{+} \frac{(-1)^i\phi^{-\frac{\alpha}{2}(i+1)}}{i + 1}  \bigg(\Gamma\left(\frac{\alpha}{2}(1{+}i) {+} j, \lbs \pi \right) {-}  \Gamma\left(\frac{\alpha}{2}(1{+}i) {+} j, \phi\right)\bigg).\IEEEeqnarraynumspace\label{eq:GJ}
\end{IEEEeqnarray}

Moreover, for large $c$, $\lbs < \frac{1}{\pi}$ and $\phi > 1$, which corresponds to the high SNR regime, this expectation can be approximated with:
\begin{IEEEeqnarray}{rCl}
    \E(\log_2(1+SNR_1)) &=& \frac{\alpha}{2 \ln(2)}\left(\ln(\phi) + \gamma - \lbs \pi \right)  + \delta_1,\IEEEeqnarraynumspace\label{eq:approx_snr_k1_thm}\\
    \E(\log_2(1+SNR_j)) &=& \frac{\alpha }{2 \ln(2)} \left(\ln(\phi) + \gamma - H_{j-1}\right) + \delta_j, \IEEEeqnarraynumspace\label{eq:approx_snr_thm}
\end{IEEEeqnarray}
for $j > 1$ and where $\gamma$ is Euler's constant, $H_j = \sum_{i=1}^j \frac{1}{i}$ is the harmonic number and for the following variables $\delta_1$ and $\delta_j$:
\begin{IEEEeqnarray}{rCl}
    \delta_1 &=&  O\big(\phi^{-1}\big) {+} O\big((\lbs \pi)^2\big) {+} O\left(\ln(c) c^{-1}\right) {+} O\left(\E(SNR_1^{-1})\right),\nonumber \IEEEeqnarraynumspace\\
    \delta_j &=& O\left(c^{-\frac{2j}{\alpha}} \ln(\phi)\right) + O\left(\Gamma(j)^{-1} c^{-\frac{2j}{\alpha}}\phi^{-j}\right) + O\left(\ln(c)c^{-1}\right) + O\left(\E(SNR_j^{-1})\right), \hspace{1cm} j > 1. \nonumber
\end{IEEEeqnarray}
\end{theorem}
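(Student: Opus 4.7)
The plan is to compute $\E(\log_2(1+SNR_j))$ directly from the density $f_{SNR_j}$ in (\ref{eq:snr_pdf}) by expanding the logarithm in a power series and converting to incomplete-gamma form via a change of variable. First, decompose
$$\E(\log_2(1+SNR_j)) = \int_0^c \log_2(1+x)\, f_{SNR_j}(x)\dx + \log_2(1+c)\,\P(SNR_j = c),$$
so that the atom at $x=c$ directly contributes the $\log_2(1+c)(1-\Gamma(j,\lbs\pi)/\Gamma(j))$ term of (\ref{eq:SNR_theorem}). Next, split the continuous integral at $x=1$: on $(0,1)$ expand $\ln(1+x)=\sum_{i\geq 0}(-1)^i x^{i+1}/(i+1)$; on $(1,c)$ write $\log_2(1+x) = \log_2(x) + \log_2(1 + 1/x)$ and apply the same series to $\log_2(1+1/x)$, which converges since $1/x < 1$ there.

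Then apply the substitution $u = \phi x^{-2/\alpha}$, which turns $f_{SNR_j}(x)\dx$ into the $\text{Gamma}(j,1)$ density $u^{j-1}e^{-u}/\Gamma(j)\,\du$ and maps $x\in(0,1)\mapsto u\in(\phi,\infty)$ and $x\in(1,c)\mapsto u\in(\lbs\pi,\phi)$. Each power $x^{i+1}$ or $x^{-(i+1)}$ becomes $(\phi/u)^{\pm\alpha(i+1)/2}$, so each resulting integral is an upper incomplete gamma function; summed over $i$ these assemble exactly into the two halves of $G(j,\phi)$ in (\ref{eq:GJ}). The residual $\log_2(x)$ piece from $(1,c)$ gives, after substitution,
$$\frac{\alpha}{2\ln 2\,\Gamma(j)}\int_{\lbs\pi}^\phi (\ln\phi - \ln u)\, u^{j-1}e^{-u}\,\du;$$
the $\ln\phi$ part yields $\ln(\phi)(\Gamma(j,\lbs\pi)-\Gamma(j,\phi))$, while the $-\ln u$ part is recognized as $-\frac{d}{da}\int u^{a-1}e^{-u}\du$ at $a=j$, producing the final bracket of (\ref{eq:SNR_theorem}).

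For the high-SNR approximation, I would use $\log_2(1+SNR_j) = \log_2(SNR_j) + O(SNR_j^{-1})$ and, on $R_j\ge 1$, $\log_2(SNR_j) = \log_2(c) - \alpha\log_2(R_j)$; on $R_j<1$ the SNR is capped at $c$, contributing $\log_2(1+c)\,\P(R_j<1) = \log_2(c)\P(R_j<1)+O(\ln(c)\,c^{-1})$. The two $\log_2(c)\,\P(R_j<1)$ pieces cancel, leaving $\log_2(c) - \alpha\E[\log_2 R_j] + \alpha\E[\log_2 R_j\,\mathbf{1}_{R_j<1}]$ plus errors. Since $\lbs\pi R_j^2\sim\text{Gamma}(j,1)$, the identity $\E[\ln(\lbs\pi R_j^2)]=\psi(j)=-\gamma+H_{j-1}$ for integer $j$ produces the leading $\frac{\alpha}{2\ln 2}(\ln\phi+\gamma-H_{j-1})$. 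For $j=1$, the remaining piece $\alpha\E[\log_2 R_1\,\mathbf{1}_{R_1<1}]$, evaluated with $f_{R_1}(r)=2\lbs\pi r e^{-\lbs\pi r^2}$ and $\int_0^1 2r\ln r\,\dr = -1/2$, equals $-\alpha\lbs\pi/(2\ln 2)+O((\lbs\pi)^2)$, yielding the $-\lbs\pi$ correction in (\ref{eq:approx_snr_k1_thm}); for $j\ge 2$ the analogous correction is $O((\lbs\pi)^j)$ and is absorbed into $\delta_j$.

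The main obstacle will be the error bookkeeping in $\delta_1$ and $\delta_j$: the replacement $\log_2(1+SNR_j)\approx \log_2(SNR_j)$ contributes $O(\E(SNR_j^{-1}))$; the cap replacement $\log_2(1+c)\approx \log_2(c)$ multiplied by $\P(R_j<1)$ gives $O(\ln(c)\,c^{-1})$; and truncating the $\alpha\E[\log_2 R_j\,\mathbf{1}_{R_j<1}]$ expansion leaves an $O((\lbs\pi)^2)$ remainder for $j=1$ and, for $j\ge 2$, the $O(c^{-2j/\alpha}\ln\phi)$ and $O(\Gamma(j)^{-1}c^{-2j/\alpha}\phi^{-j})$ remainders arising when the boundary $u=\lbs\pi$ in the incomplete gamma $\Gamma(j,\lbs\pi)$ is expanded. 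Matching each of these against the precise $O(\cdot)$ forms claimed for $\delta_1,\delta_j$ is where most of the analytic labor sits; the algebraic identifications (series $\leftrightarrow$ $G(j,\phi)$ and $\psi(j)=H_{j-1}-\gamma$ at positive integers) are standard.
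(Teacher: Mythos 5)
Your derivation of the exact expression \eqref{eq:SNR_theorem} is essentially the paper's own proof: the same split into the point mass at $x=c$ plus three integrals over $(0,1)$ and $(1,c)$ (with $\log_2(1+x)=\log_2 x+\log_2(1+1/x)$ on the latter), the same termwise Taylor expansion of the logarithm, the same substitution $u=\phi x^{-2/\alpha}$ producing incomplete gamma functions that assemble into $G(j,\phi)$, and the same identification of $\int \ln(u)\,u^{j-1}e^{-u}\,\text{d}u$ with a parameter derivative of $\Gamma(a,\cdot)$. Where you genuinely diverge is the high-SNR approximation. The paper stays with the closed form: it expands $I_3$ using the finite-sum representation of $\Gamma(s,x)$ for integer $s$, a known formula for $\tfrac{\text{d}}{\text{d}s}\Gamma(s,x)$, and an asymptotic expansion of a ${}_2F_2$ hypergeometric function via a Meijer $G$-function, and separately bounds $I_1$, $I_2$ and approximates $R$. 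You instead return to the probabilistic representation, writing $\log_2(1+SNR_j)\approx \log_2 c-\alpha\log_2 R_j$ on $\{R_j\ge 1\}$ and invoking $\E[\ln Y]=\psi(j)=-\gamma+H_{j-1}$ for $Y=\lbs\pi R_j^2\sim\mathrm{Gamma}(j,1)$; this is shorter, avoids the special-function machinery entirely, and makes the origin of $\gamma$ and $H_{j-1}$ (and, via $\int_0^1 2r\ln r\,\text{d}r=-1/2$, of the $-\lbs\pi$ correction at $j=1$) transparent. What the paper's route buys is that the remainders emerge in (roughly) the specific forms listed for $\delta_j$; your route produces equivalent corrections in different guises --- e.g.\ for $j\ge 2$ the boundary term $\alpha\E[\log_2 R_j\,\mathbf{1}_{R_j<1}]$ is $O((\lbs\pi)^j/\Gamma(j))=O(c^{-2j/\alpha}\phi^{j}/\Gamma(j))$, which is not literally one of the four terms in the stated $\delta_j$ --- so, as you yourself anticipate, reconciling your remainders with the theorem's $O(\cdot)$ forms is where the residual work lies (and, to be fair, the paper's own bookkeeping of exactly this term is comparably loose).
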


As can be seen in Fig.\ref{fig:I3_approximation}, the approximation given in Theorem~\ref{thm:expectation_log_snr} matches the real capacity in high SNR regime. Moreover, the plot shows that the relative differences of $\E(\log_2(1 + SNR_j))$ for $j = 1$ and $j = 5$ become lower under dense deployment of BSs represented by higher $\lbs$.
\begin{figure}[th]
    \centering
    \includegraphics[width = 0.4\textwidth]{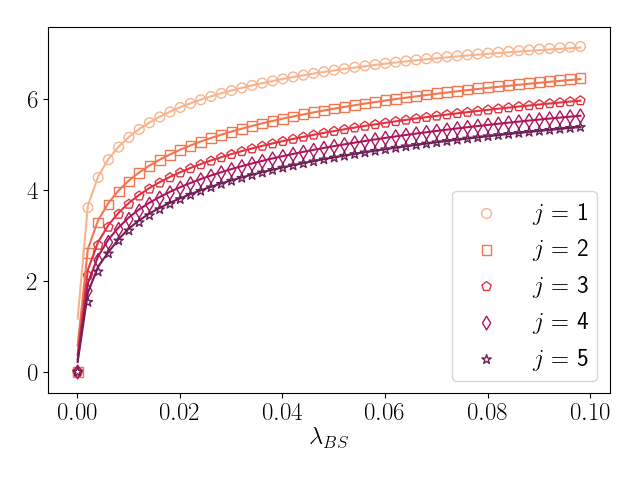}
    \caption{Real value (line) and approximation (circles) of $\E(\log_2(1+SNR_j))$ with $c = 12800$ and $\alpha = 2$.}
    \label{fig:I3_approximation}
\end{figure}

\subsection{Expected channel capacity}
By Theorem \ref{thm:expectation_log_snr} and \eqref{eq:expectation_w_general}, we now have an expression for the expected channel capacity of a single user that connects to BS$_j$ with a total of $k$ connections per user:
\begin{IEEEeqnarray}{rCl}
    \E(C_j) &= \frac{\overline{\wtot}}{k\lu} \cdot \E\left(\log_2(1+SNR_j)\right), \label{eq:expectation_channel_j}
\end{IEEEeqnarray}
and its approximation, using \eqref{eq:approx_snr_k1_thm} and \eqref{eq:approx_snr_thm}:
\begin{IEEEeqnarray}{rCll}
    \E(C_1) &=& \frac{\overline{\wtot}}{k\lu} \cdot\frac{\alpha}{2 \ln(2)}\left(\ln(\phi) +\gamma - \lbs \pi \right) + \delta_1, &\\
    \E(C_j) &=& \frac{\overline{\wtot}}{k\lu} \cdot \frac{\alpha }{2 \ln(2)} \left(\ln(\phi) + \gamma - H_{j-1}\right) + \delta_j, &j > 1.\IEEEeqnarraynumspace\label{eq:expected_channel_approx_jgeq1}
\end{IEEEeqnarray}

Note that we assume that the expected degree of a BS and the distance to this BS are independent. While this assumption is realistic for larger values of $k$ where the degree distribution becomes more concentrated~\cite{stegehuis2021degree}, it may not always hold as we will show in Section~\ref{sec:link_failures}. 

For $k > 1$, we are interested in the sum of the channel capacities, $C^k_{sum}$, as defined in \eqref{eq:sum_channel_capacity}:
\begin{IEEEeqnarray}{rCl}
    \E(C^k_{sum}) &=& \sum_{j=1}^k \E(C_j) \nonumber \\
    &=& \frac{\overline{\wtot}}{k\lu} \frac{\alpha}{2 \ln(2)} \Big(k\ln(\phi) + k\gamma -\lbs \pi - \sum_{j=1}^{k-1} H_{j}\Big)  \nonumber \\ 
    &=& \frac{\overline{\wtot}}{k\lu}\frac{\alpha}{2 \ln(2)} \left(k\ln(\phi) + k\gamma -\lbs \pi - k H_{k} + k\right) \nonumber \\
    &=& \frac{\overline{\wtot}}{\lu} \frac{\alpha}{2 \ln(2)} \Big( \ln(\phi) + \gamma  + 1-  \Big(H_k + \frac{\lbs \pi}{k}\Big)\Big),\label{eq:c_sum_approx}
\end{IEEEeqnarray}
where we used the property of harmonic numbers: $\sum_{j=1}^{k-1}H_j = kH_k - k$.

\begin{theorem}\label{th:decreasing_snr}
    In the high SNR regime, the sum of the channel capacities, $\E(C^k_{sum})$, as defined in \eqref{eq:c_sum_approx}, is decreasing in $k$.
\end{theorem}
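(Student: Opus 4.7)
The plan is to observe that most of the expression in \eqref{eq:c_sum_approx} is $k$-independent, so showing $\E(C^k_{sum})$ is decreasing in $k$ reduces to showing that $f(k) := H_k + \frac{\lbs \pi}{k}$ is strictly increasing in $k$. Indeed, the prefactor $\frac{\overline{\wtot}}{\lu} \cdot \frac{\alpha}{2\ln(2)}$ is a positive constant, and the terms $\ln(\phi) + \gamma + 1$ inside the bracket do not depend on $k$, so only the sign of the change in $-f(k)$ matters.

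First I would compute the forward difference
\begin{equation*}
f(k+1) - f(k) = \frac{1}{k+1} + \lbs \pi \left(\frac{1}{k+1} - \frac{1}{k}\right) = \frac{1}{k+1} - \frac{\lbs \pi}{k(k+1)} = \frac{k - \lbs \pi}{k(k+1)}.
\end{equation*}
Then I would invoke the standing hypothesis of the high SNR regime stated in Theorem~\ref{thm:expectation_log_snr}, namely $\lbs < \frac{1}{\pi}$, which gives $\lbs \pi < 1 \leq k$ for every integer $k \geq 1$. Consequently $f(k+1) - f(k) > 0$, so $f$ is strictly increasing. Substituting back into \eqref{eq:c_sum_approx}, the bracket $\ln(\phi) + \gamma + 1 - f(k)$ is strictly decreasing in $k$, and multiplying by the positive constant prefactor preserves this, yielding $\E(C^k_{sum}) > \E(C^{k+1}_{sum})$.

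There is essentially no obstacle here beyond bookkeeping: the content of the theorem is entirely captured by the algebraic identity above combined with the density assumption $\lbs \pi < 1$. The only point worth flagging is that the result is sharp in the sense that if one were to drop the assumption $\lbs < 1/\pi$, monotonicity would fail for small $k$ (specifically whenever $k < \lbs \pi$), since then $f(k+1) - f(k)$ would be negative. So the main thing to make explicit in the write-up is \emph{where} the hypothesis of the theorem enters, and to note that the harmonic-number identity $\sum_{j=1}^{k-1} H_j = kH_k - k$ used to derive \eqref{eq:c_sum_approx} is the reason the $k$-dependence collapses so cleanly into $H_k + \lbs \pi / k$ in the first place.
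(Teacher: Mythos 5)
Your proposal is correct and follows essentially the same route as the paper's own proof: isolate the only $k$-dependent term $H_k + \frac{\lbs\pi}{k}$ in \eqref{eq:c_sum_approx}, show it is strictly increasing via the forward difference, and use $\lbs\pi < 1 \leq k$ to conclude. Your explicit computation $f(k+1)-f(k) = \frac{k-\lbs\pi}{k(k+1)}$ and the remark on where the hypothesis enters are just a slightly more transparent rendering of the same one-line inequality the paper uses.
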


\begin{proof}
    As a large part of \eqref{eq:c_sum_approx} does not depend on $k$, we only need to focus on the part $H_k + \frac{\lbs \pi}{k}$. 
    %
    Assume $k \geq 1, k \in \mathrm{N}$. Then, for $k + 1$:
    \begin{align}
        H_{k + 1} + \frac{\lbs \pi}{k+1}  = H_{k} + \frac{1 + \lbs \pi}{k+1} > H_{k} + \frac{\lbs \pi}{k},\nonumber
    \end{align}
    which holds for $\lbs \pi < 1$. Thus, for $k + 1$, we subtract by a larger value in~\eqref{eq:c_sum_approx}, and thus $\E(C^k_{sum})$ decreases in $k$.
\end{proof}

We use $\E(C_{sum}^k)$ as a measure to compare different degrees of MC under different types of failures in the following section.

\section{Impact of Link Failures on Multi-connectivity Capacity and Outage Probability}\label{sec:link_failures}
We now investigate the performance of MC networks under the following failure models, some of which are illustrated in Figure~\ref{tikz:failure_models}:
\begin{enumerate}
    \item \textbf{Random failures}: Every BS-U link fails with a probability $p$.
    \item \textbf{Overload failures}: BSs fail with a probability proportional to their degree. This scenario might reflect denial-of-service attacks in which a BS becomes non-functional due to many requests.
    \item \textbf{Distance failures}: Links fail when the distance between the user and the BS is higher than a certain threshold.
    \item \textbf{Line-of-sight failures}: Links fail because they are not in line-of-sight of a user.
\end{enumerate}

 We consider the sum of the channel capacities of a user who is connected to BS $1, 2, 3, \ldots, k$, where BS$_1$ is the closest and BS$_k$ is the furthest one. The expected channel capacity under failures can be written as:
\begin{IEEEeqnarray}{rCl}
    \E(C^k_{sum}) &= \sum_{j=1}^k \E\left((1- p_j)  C_j\right),\label{eq:channel_capacity_with_failure}
\end{IEEEeqnarray}
where $p_j$ is the probability that the link to BS$_j$ fails and $C_j$ is defined in \eqref{eq:channel_capacity}. Below, we derive analytic expressions for $p_j$ for the above-listed failure models. The second quantity we investigate is the outage probability, $\P(D_U = 0)$, defined as the probability that a user has no connections at all. 

Moreover, we consider two cases: no re-allocation or instantaneous re-allocation of bandwidth after a failure happens. In the case of instantaneous re-allocation, we assume that the BS re-allocates the bandwidth previously allocated to the failed link(s) among its currently active links. This means that in the expected channel capacity \eqref{eq:sum_channel_capacity}, we divide by the degree of the BS after failures, which leads to an increase in per-user bandwidth. Meanwhile, in the case without re-allocation, per-user bandwidth remains the same as the bandwidth is not re-allocated. 
In the following, we find exact expressions for the expected channel capacity and the outage probability without re-allocation. After that, we provide numerical results on failures with and without re-allocation.

\usetikzlibrary{arrows, arrows.meta}
\begin{figure*}[h!]


\centering
    \subcaptionbox{Random failure $p = 0.25$ \label{tikz:random_failure}}{
    \resizebox{0.31\textwidth}{!}{%
    \begin{tikzpicture}
        \node[inner sep = -2pt] (BS1) at (1,0.5) {\includegraphics[scale = 0.05]{base_station_icon.png}};
        \node[inner sep = -2pt] (BS2) at (-2.5,3) {\includegraphics[scale = 0.05]{base_station_icon.png}};
        \node[inner sep = -2pt] (BS3) at (3,-1.5) {\includegraphics[scale = 0.05]{base_station_icon.png}};
        \node[inner sep = -2pt] (BS4) at (-2,-2) {\includegraphics[scale = 0.05]{base_station_icon.png}};
        
        \node[shape = circle, fill, color = green, scale =0.5] (U1) at (-2,1.5) {};
        \node[shape = circle, fill, color = green, scale =0.5]  (U2) at (-1,2) {};
        \node[shape = circle, fill, color = green, scale =0.5] (U3) at (2,2.5) {};
        \node[shape = circle, fill, color = red, scale =0.5]  (U4) at (0,3.5) {};
        \node[shape = circle, fill, color = green, scale =0.5] (U5) at (-2.5,0.5) {};
        \node[shape = circle, fill, color = green, scale =0.5] (U6) at (-1.5,-1) {};
        \node[shape = circle, fill, color = yellow, scale =0.5] (U7) at (2,-3) {};
        \node[shape = circle, fill, color = red, scale =0.5]  (U8) at (-1.5,-3) {};
        \node[shape = circle, fill, color = yellow, scale =0.5] (U9) at (2,-1) {};
        \node[shape = circle, fill, color = green, scale =0.5] (U10) at (4,-2) {};
        \node[shape = circle, fill, color = yellow, scale =0.5] (U11) at (2,0) {};
        \node[shape = circle, fill, color = green, scale =0.5] (U12) at (3.5,0.5) {};
        \node[shape = circle, fill, color = yellow, scale =0.5] (U13) at (-3.5,-1.5) {};
        \node[shape = circle, fill, color = green, scale =0.5] (U14) at (-4,-0.5) {};
        \node[shape = circle, fill, color = green, scale =0.5] (U15) at (4.5,1.5) {};
        \node[shape = circle, fill, color = green, scale =0.5] (U16) at (-3.5,3) {};
        \node[shape = circle, fill, color = green, scale =0.5] (U17) at (-1,0.5) {};
        \node[shape = circle, fill, color = yellow, scale =0.5] (U18) at (-0.5,0) {};
        \node[shape = circle, fill, color = green, scale =0.5] (U19) at (1,-2) {};
        
        \draw[color = lightgray, thick] (U16) -- (BS2);
        \draw[color = lightgray, thick] (U1) -- (BS2);
        \draw[color = lightgray, thick] (U2) -- (BS2);
        \draw[color = red, dotted, thick] (U4) -- (BS2);
        \draw[color = lightgray, thick] (U5) -- (BS2);
        
        \draw[color = lightgray, thick] (U14) -- (BS4);
        \draw[color = lightgray, thick] (U13) -- (BS4);
        \draw[color = red, dotted, thick] (U8) -- (BS4);
        \draw[color = lightgray, thick] (U6) -- (BS4);
        
        \draw[color = lightgray, thick] (U18) -- (BS1);
        \draw[color = lightgray, thick] (U17) -- (BS1);
        \draw[color = red, dotted, thick] (U11) -- (BS1);
        \draw[color = lightgray, thick] (U3) -- (BS1);
        
        \draw[color = lightgray, thick] (U15) -- (BS3);
        \draw[color = lightgray, thick] (U12) -- (BS3);
        \draw[color = red, dotted, thick] (U9) -- (BS3);
        \draw[color = lightgray, thick] (U7) -- (BS3);
        \draw[color = lightgray, thick] (U10) -- (BS3);
        \draw[color = lightgray, thick] (U19) -- (BS3);
        
        \draw [color = red, dotted, thick](U13) -- (BS2);
        \draw [color = lightgray, thick](U14) -- (BS2);
        \draw [color = lightgray, thick](U17) -- (BS2);
        
        \draw [color = lightgray, thick](U5) -- (BS4);
        \draw [color = red, dotted, thick](U18) -- (BS4);
        \draw [color = lightgray, thick](U19) -- (BS1);
        
        \draw [color = lightgray, thick](U1) -- (BS1);
        \draw [color = lightgray, thick](U16) -- (BS1);
        \draw [color = red, dotted, thick](U4) -- (BS1);
        \draw [color = lightgray, thick](U2) -- (BS1);
        \draw [color = lightgray, thick](U9) -- (BS1);
        \draw [color = lightgray, thick](U6) -- (BS1);
        \draw [color = red, dotted, thick](U8) -- (BS1);
        \draw [color = lightgray, thick](U12) -- (BS1);
        \draw [color = lightgray, thick](U15) -- (BS1);
        \draw [color = lightgray, thick](U10) -- (BS1);
        \draw [color = red, dotted, thick](U7) -- (BS1);
        
        \draw [color = lightgray, thick](U3) -- (BS3);
        \draw [color = lightgray, thick](U11) -- (BS3);
        
    \end{tikzpicture}%
    }}
    \subcaptionbox{Overload failure \label{tikz:overload_failure}}{
    \resizebox{0.31\textwidth}{!}{%
    \begin{tikzpicture}
        \node[inner sep = -2pt] (BS1) at (1,0.5) {\includegraphics[scale = 0.05]{base_station_icon.png}};
        \node[inner sep = -2pt] (BS2) at (-2.5,3) {\includegraphics[scale = 0.05]{base_station_icon.png}};
        \node[inner sep = -2pt] (BS3) at (3,-1.5) {\includegraphics[scale = 0.05]{base_station_icon.png}};
        \node[inner sep = -2pt] (BS4) at (-2,-2) {\includegraphics[scale = 0.05]{base_station_icon.png}};
        
        \draw[red, thick] (1.5,1) -- (0.5,0);
        \draw[red, thick] (0.5,1) -- (1.5,0);
        
        \node[right] at (0.43,2) {\footnotesize $D_{BS} \geq 9$};
        \draw[-Latex]  plot[smooth, tension=.7] coordinates {(1.05,0.9) (1.1,1.3) (1.15,1.7)};
        
        \node[shape = circle, fill, color = yellow, scale =0.5] (U1) at (-2,1.5) {};
        \node[shape = circle, fill, color = yellow, scale =0.5]  (U2) at (-1,2) {};
        \node[shape = circle, fill, color = yellow, scale =0.5] (U3) at (2,2.5) {};
        \node[shape = circle, fill, color = yellow, scale =0.5]  (U4) at (0,3.5) {};
        \node[shape = circle, fill, color = green, scale =0.5] (U5) at (-2.5,0.5) {};
        \node[shape = circle, fill, color = yellow, scale =0.5] (U6) at (-1.5,-1) {};
        \node[shape = circle, fill, color = yellow, scale =0.5] (U7) at (2,-3) {};
        \node[shape = circle, fill, color = yellow, scale =0.5]  (U8) at (-1.5,-3) {};
        \node[shape = circle, fill, color = yellow, scale =0.5] (U9) at (2,-1) {};
        \node[shape = circle, fill, color = yellow, scale =0.5] (U10) at (4,-2) {};
        \node[shape = circle, fill, color = yellow, scale =0.5] (U11) at (2,0) {};
        \node[shape = circle, fill, color = yellow, scale =0.5] (U12) at (3.5,0.5) {};
        \node[shape = circle, fill, color = green, scale =0.5] (U13) at (-3.5,-1.5) {};
        \node[shape = circle, fill, color = green, scale =0.5] (U14) at (-4,-0.5) {};
        \node[shape = circle, fill, color = yellow, scale =0.5] (U15) at (4.5,1.5) {};
        \node[shape = circle, fill, color = yellow, scale =0.5] (U16) at (-3.5,3) {};
        \node[shape = circle, fill, color = yellow, scale =0.5] (U17) at (-1,0.5) {};
        \node[shape = circle, fill, color = yellow, scale =0.5] (U18) at (-0.5,0) {};
        \node[shape = circle, fill, color = yellow, scale =0.5] (U19) at (1,-2) {};
        
        \draw[color = lightgray, thick] (U16) -- (BS2); 
        \draw[color = lightgray, thick] (U1) -- (BS2);
        \draw[color = lightgray, thick] (U2) -- (BS2);
        \draw[color = lightgray, thick] (U4) -- (BS2);
        \draw[color = lightgray, thick] (U5) -- (BS2);
        
        \draw[color = lightgray, thick] (U14) -- (BS4);
        \draw[color = lightgray, thick] (U13) -- (BS4);
        \draw[color = lightgray, thick] (U8) -- (BS4);
        \draw[color = lightgray, thick] (U6) -- (BS4);
        
        \draw[color = red, dotted, thick] (U18) -- (BS1);
        \draw[color = red, dotted, thick] (U17) -- (BS1);
        \draw[color = red, dotted, thick] (U11) -- (BS1);
        \draw[color = red, dotted, thick] (U3) -- (BS1);
        
        \draw[color = lightgray, thick] (U15) -- (BS3);
        \draw[color = lightgray, thick] (U12) -- (BS3);
        \draw[color = lightgray, thick] (U9) -- (BS3);
        \draw[color = lightgray, thick] (U7) -- (BS3);
        \draw[color = lightgray, thick] (U10) -- (BS3);
        \draw[color = lightgray, thick] (U19) -- (BS3);
        
        \draw [color = lightgray, thick](U13) -- (BS2);
        \draw [color = lightgray, thick](U14) -- (BS2);
        \draw [color = lightgray, thick](U17) -- (BS2);
        
        \draw [color = lightgray, thick](U5) -- (BS4);
        \draw [color = lightgray, thick](U18) -- (BS4);
        
        \draw [color = red, dotted, thick](U19) -- (BS1);
        \draw [color = red, dotted, thick](U1) -- (BS1);
        \draw [color = red, dotted, thick](U16) -- (BS1);
        \draw [color = red, dotted, thick](U4) -- (BS1);
        \draw [color = red, dotted, thick](U2) -- (BS1);
        \draw [color = red, dotted, thick](U9) -- (BS1);
        \draw [color = red, dotted, thick](U6) -- (BS1);
        \draw [color = red, dotted, thick](U8) -- (BS1);
        \draw [color = red, dotted, thick](U12) -- (BS1);
        \draw [color = red, dotted, thick](U15) -- (BS1);
        \draw [color = red, dotted, thick](U10) -- (BS1);
        \draw [color = red, dotted, thick](U7) -- (BS1);
        
        \draw [color = lightgray, thick](U3) -- (BS3);
        \draw [color = lightgray, thick](U11) -- (BS3);
        
        \end{tikzpicture}%
        }}
        \subcaptionbox{Distance failure, $r_{max} = 300m$ \label{tikz:det_distance_failure}}{
        \resizebox{0.31\textwidth}{!}{%
        \begin{tikzpicture}
        
        \node[inner sep = -2pt] (BS1) at (1,0.5) {\includegraphics[scale = 0.05]{base_station_icon.png}};
        \node[inner sep = -2pt] (BS2) at (-2.5,3) {\includegraphics[scale = 0.05]{base_station_icon.png}};
        \node[inner sep = -2pt] (BS3) at (3,-1.5) {\includegraphics[scale = 0.05]{base_station_icon.png}};
        \node[inner sep = -2pt] (BS4) at (-2,-2) {\includegraphics[scale = 0.05]{base_station_icon.png}};
        
        \node[shape = circle, fill, color = yellow, scale =0.5] (U1) at (-2,1.5) {};
        \node[shape = circle, fill, color = green, scale =0.5]  (U2) at (-1,2) {};
        \node[shape = circle, fill, color = yellow, scale =0.5] (U3) at (2,2.5) {};
        \node[shape = circle, fill, color = yellow, scale =0.5]  (U4) at (0,3.5) {};
        \node[shape = circle, fill, color = green, scale =0.5] (U5) at (-2.5,0.5) {};
        \node[shape = circle, fill, color = green, scale =0.5] (U6) at (-1.5,-1) {};
        \node[shape = circle, fill, color = yellow, scale =0.5] (U7) at (2,-3) {};
        \node[shape = circle, fill, color = yellow, scale =0.5]  (U8) at (-1.5,-3) {};
        \node[shape = circle, fill, color = green, scale =0.5] (U9) at (2,-1) {};
        \node[shape = circle, fill, color = yellow, scale =0.5] (U10) at (4,-2) {};
        \node[shape = circle, fill, color = green, scale =0.5] (U11) at (2,0) {};
        \node[shape = circle, fill, color = green, scale =0.5] (U12) at (3.5,0.5) {};
        \node[shape = circle, fill, color = yellow, scale =0.5] (U13) at (-3.5,-1.5) {};
        \node[shape = circle, fill, color = yellow, scale =0.5] (U14) at (-4,-0.5) {};
        \node[shape = circle, fill, color = red, scale =0.5] (U15) at (4.5,1.5) {};
        \node[shape = circle, fill, color = yellow, scale =0.5] (U16) at (-3.5,3) {};
        \node[shape = circle, fill, color = green, scale =0.5] (U17) at (-1,0.5) {};
        \node[shape = circle, fill, color = green, scale =0.5] (U18) at (-0.5,0) {};
        \node[shape = circle, fill, color = green, scale =0.5] (U19) at (1,-2) {};
        
        \draw[color = lightgray, thick] (U16) -- (BS2); 
        \draw[color = lightgray, thick] (U1) -- (BS2);
        \draw[color = lightgray, thick] (U2) -- (BS2);
        \draw[color = lightgray, thick] (U4) -- (BS2);
        \draw[color = lightgray, thick] (U5) -- (BS2);
        
        \draw[color = lightgray, thick] (U14) -- (BS4);
        \draw[color = lightgray, thick] (U13) -- (BS4);
        \draw[color = lightgray, thick] (U8) -- (BS4);
        \draw[color = lightgray, thick] (U6) -- (BS4);
        
        \draw[color = lightgray, thick] (U18) -- (BS1);
        \draw[color = lightgray, thick] (U17) -- (BS1);
        \draw[color = lightgray, thick] (U11) -- (BS1);
        \draw[color = lightgray, thick] (U3) -- (BS1);
        
        \draw[color =  red, dotted, thick] (U15) -- (BS3);
        \draw[color = lightgray, thick] (U12) -- (BS3);
        \draw[color = lightgray, thick] (U9) -- (BS3);
        \draw[color = lightgray, thick] (U7) -- (BS3);
        \draw[color = lightgray, thick] (U10) -- (BS3);
        \draw[color = lightgray, thick] (U19) -- (BS3);
        
        \draw [color =  red, dotted, thick](U13) -- (BS2);
        \draw [color =  red, dotted, thick](U14) -- (BS2);
        \draw [color = lightgray, thick](U17) -- (BS2);
        
        \draw [color = lightgray, thick](U5) -- (BS4);
        \draw [color = lightgray, thick](U18) -- (BS4);
        
        \draw [color = lightgray, thick](U19) -- (BS1);
        \draw [color = red, dotted, thick](U1) -- (BS1);
        \draw [color =  red, dotted, thick](U16) -- (BS1);
        \draw [color =  red, dotted, thick](U4) -- (BS1);
        \draw [color = lightgray, thick](U2) -- (BS1);
        \draw [color = lightgray, thick](U9) -- (BS1);
        \draw [color = lightgray, thick](U6) -- (BS1);
        \draw [color =  red, dotted, thick](U8) -- (BS1);
        \draw [color = lightgray, thick](U12) -- (BS1);
        \draw [color =  red, dotted, thick](U15) -- (BS1);
        \draw [color =  red, dotted, thick](U10) -- (BS1);
        \draw [color =  red, dotted, thick](U7) -- (BS1);
        
        \draw [color =  red, dotted, thick](U3) -- (BS3);
        \draw [color = lightgray, thick](U11) -- (BS3);

        \node[right] at (0.4,3.1) {\footnotesize $R_2 \geq 300$};
        \draw[-Latex]  plot[smooth, tension=.7] coordinates {(0.6,2.1) (1,2.3) (1.1,2.8)};

        \end{tikzpicture}%
        }}
    \caption{Different failure models, where the red dotted, thick links have failed. Green users are still connected, yellow users are partially connected and red users are disconnected from the network. }
    \label{tikz:failure_models}
\end{figure*}

\subsection{Random failures}
Under random failures, every link fails with probability $p$. Therefore,~\eqref{eq:channel_capacity_with_failure} becomes:
\begin{IEEEeqnarray}{rCl}
    \E(C^k_{sum}) &=&  \frac{\overline{\wtot}}{k\lu} (1-p) \sum_{j=1}^k \E\left(\log_2(1+SNR_j)\right), \label{eq:random_channel}
\end{IEEEeqnarray}
where we obtain $\E(\log_2(1+SNR_j))$ from Theorem~\ref{thm:expectation_log_snr}.

The outage probability under random failures is:
\begin{IEEEeqnarray}{rCl}
    \P(D_U = 0) = \P(\text{every link fails}) = p^k.\label{eq:random_outage}
\end{IEEEeqnarray}

\subsection{Overload failures}
In the case of an overload failure, BSs fail with a probability proportional to the number of users who are connected to that BS. We assume that failures happen with probability $p$, defined as:
\begin{IEEEeqnarray}{rCl}
    p = \begin{cases}
    0, \hspace{2cm}& D_{BS} = 0,\\
    1 - (D_{BS})^{-\beta}, &D_{BS}>0,
    \end{cases}
\end{IEEEeqnarray}
for $\beta > 0$, where $D_{BS}$ is the degree of a BS. 
Thus, the probability that a given link from a user fails, is given by $1 - (D_{BS}^*)^{-\beta}$ where $D^*_{BS}$ is the size-biased distribution of $D_{BS}$. 
Thus, the expected channel capacity is:
\begin{IEEEeqnarray}{rCl}
    \E(C^k_{sum}) &=& \sum_{j=1}^k \E\left( \left(\left(D^*_{BS}\right)^{-\beta}\right) C_j\right) \nonumber\\
    &=& \sum_{j=1}^k \E\left( \left(D^*_{BS}\right)^{-\beta} \frac{\overline{\wtot}}{\lbs D^*_{BS}}  \log_2(1+SNR_j)\right)\nonumber \\
    &=& \wtotarea  \E\left(\left(D^*_{BS}\right)^{-\beta-1}\right)\sum_{j=1}^k \E(\log_2(1+SNR_j).\ruimte
    \label{eq:overload_failure}
\end{IEEEeqnarray}

We obtain the expectation of $\left(D_{BS}^*\right)^{-\beta - 1}$ similar to \eqref{eq:expectation_w_general}:
\begin{IEEEeqnarray}{rCl}
    \E\left(\left(D^*_{BS}\right)^{-\beta-1}\right) &=& \sum_{n= 1}^\infty n^{-\beta - 1} \P(D_{BS}^* = n) \nonumber \\
    &=& \frac{1}{k \lambda} \sum_{n=1}^\infty n^{-\beta} \P(D_{BS} = n) \nonumber \\
    &=& \frac{1}{k \lambda}\sum_{n=1}^\infty \frac{\Gamma(n+a_k)}{\Gamma(n+1)\Gamma(a_k)} \frac{a_k^{a_k}(k\lambda)^{n}}{(k\lambda + a_k)^{n+a_k}}n^{-\beta},
\end{IEEEeqnarray}
for $\lambda = \lu/\lbs$ and $\P(D_{BS} = n)$ as defined in \eqref{eq:degree_distribution}. Therefore,~\eqref{eq:overload_failure} equals:
\begin{IEEEeqnarray}{rCl}
    \E(C^k_{sum}) &=& \frac{\overline{\wtot}}{k \lu} \Bigg(\sum_{n=1}^\infty \frac{\Gamma(n+a_k)}{\Gamma(n+1)\Gamma(a_k)} \frac{a_k^{a_k}(k\lambda)^{n}}{(k\lambda + a_k)^{n+a_k}}n^{-\beta}\Bigg)  \sum_{j=1}^k \E\left(\log_2(1+SNR_j)\right),\label{eq:overload_channel} \IEEEeqnarraynumspace
\end{IEEEeqnarray}
where we obtain $\E(\log_2(1+SNR_j))$ from Theorem~\ref{thm:expectation_log_snr}.

Subsequently, the outage probability is:
\begin{IEEEeqnarray}{rCl}
    \P(D_U = 0)&=& \prod_{j=1}^k  \E\left(1 - \left(D_{BS}^*\right)^{-\beta}\right)=
     \Bigg(1 -\frac{1}{k\lambda} \sum_{n=1}^\infty \frac{\Gamma(n+a_k)}{\Gamma(n+1)\Gamma(a_k)} \frac{a_k^{a_k}(k\lambda)^{n}}{(k\lambda + a_k)^{n+a_k}}n^{-\beta}\Bigg)^k, \label{eq:overload_outage}\ruimte
\end{IEEEeqnarray}
as BSs fail independently from each other.

\subsection{Distance failures}
In this case, a link fails when the distance between the user and its associated BS is larger than $r_{max}$. This distance $r_{max}$ can be interpreted as the maximum distance such that the SNR is above a certain threshold. Then, the probability that link $j$ fails, is as follows:
\begin{IEEEeqnarray}{rCl}
    p_j &=& \P(R_j > r_{max}) = \frac{\Gamma(j, \lbs \pi r_{max}^2)}{\Gamma(j)},
\end{IEEEeqnarray}
where we used \eqref{eq:distance_probability}. This gives the following expression for $\E\left(C^k_{sum}\right)$:

\begin{IEEEeqnarray}{rCl}
     \E(C^k_{sum}) &=&  \frac{\overline{\wtot}}{k \lu} \sum_{j=1}^k (1{-}p_j) \E\left(\log_2(1{+}SNR_j) | R_j \leq r_{max}\right) \nonumber \\
     &=& \frac{\overline{\wtot}}{k \lu}\sum_{j=1}^k \P(R_j \leq r_{max})  \frac{\E(\log(1+SNR_j), R_j \leq r_{max})}{\P(R_j \leq r_{max})}\nonumber \\
     &=& \frac{\overline{\wtot}}{k \lu}\sum_{j=1}^k\Bigg( \int_1^{r_{max}} \log_2(1+c r^{-\alpha}) f_{R_j}(r)\dr + \log_2(1+c)\left(1 - \frac{\Gamma(j, \lbs \pi)}{\Gamma(j)}\right)\Bigg),\ruimte\label{eq:distance_channel}
\end{IEEEeqnarray}
where the last term of this expression comes from the case $R_j \leq 1$ in the definition of the SNR in \eqref{eq:definition_snr}, in which case the SNR is equal to $c$.

For the outage probability, we calculate the probability that the first link (hence from the closest BS) of a user fails.  Since all other links are established with  more distant BSs, they will also fail when the first link fails. Therefore, we calculate the outage probability as follows:
\begin{IEEEeqnarray}{rCl}
    \P(D_U = 0) 
    = \P(R_1 \geq r_{max}) = e^{-\lbs\pi r_{max}^2}.\label{eq:distance_outage}\IEEEeqnarraynumspace
\end{IEEEeqnarray}

\subsection{LoS failures}
We now investigate LoS failures, where a link between a user and a BS fails if the link is non-LoS.
In \cite{Gapeyenko2019OnDeployments}, the authors derive the probability that a connection is non-LoS for a mmWave network as follows:
\begin{IEEEeqnarray}{rCl}
    p_j &=& \begin{cases}
    0, \hfill R_j \leq \rlos,\\
    1- R_j^{-1}\left(\rlos+R_j e^{\frac{-R_{j}}{2 \rlos}}-18e^{\frac{-R_j}{2 \rlos}}\right), R_j > \rlos,
    \end{cases}\nonumber
\end{IEEEeqnarray}
where $\rlos$ is a parameter that describes at what distance objects are more likely to become non-LoS.
This LoS failure only depends on the distance between the user and the BS. We define:
\begin{equation}
    g(r)=1- r^{-1}\left(\rlos+r e^{\frac{-r}{2 \rlos}}-18e^{\frac{-r}{2\rlos}}\right).
\end{equation}
Then, the expected channel capacity under LoS failures equals:
\begin{IEEEeqnarray}{rCl}
    \E(C^k_{sum}) &=& \frac{\overline{\wtot}}{k \lu} \sum_{j=1}^k \E\left( (1-p_j) \log_2(1+SNR_j) \right)\nonumber\\
    &=& \frac{\overline{\wtot}}{k \lu} \sum_{j=1}^k \E\left( \log_2(1+SNR_j^{-\alpha},  R_j \leq \rlos\right) + \E\left( (1-p_j) \log_2(1+ SNR_j^{-\alpha}, R_j > \rlos) \right)\nonumber\\
    &=& \frac{\overline{\wtot}}{k \lu} \sum_{j=1}^k \Bigg(\E\left(\log_2(1+SNR_j), R_j \leq \rlos\right) + \int_{\rlos}^{\infty} g(r)\log_2(1+c r^{-\alpha})f_{R_j}(r)\dr\Bigg).
    \label{eq:LoS_channel}
\end{IEEEeqnarray}

Under the assumption that links fail independently, the outage probability becomes:
\begin{IEEEeqnarray}{rCl}
    \P(D_U = 0)&=& \prod_{j=1}^k \P(\text{link }i\text{ fails}) = \prod_{j=1}^k \E(p_j)\nonumber \\
    &=& \prod_{j=1}^k \int_{\rlos}^\infty g(r)f_{R_j}(r)\dr\nonumber \\
    &=& \prod_{j=1}^k \left(\P(R_j \geq \rlos) - \int_{\rlos}^\infty g(r) f_{R_j}(r)\dr \right) \nonumber \\
    &=& \prod_{j=1}^k \frac{\Gamma(j, \rlos^2\lbs \pi)}{\Gamma(j)} - \int_{\rlos}^\infty g(r) f_{R_j}(r)\dr.\label{eq:los_outage}
\end{IEEEeqnarray}

\begin{figure}[t] 
\centering
    \includegraphics[width=.4\textwidth]{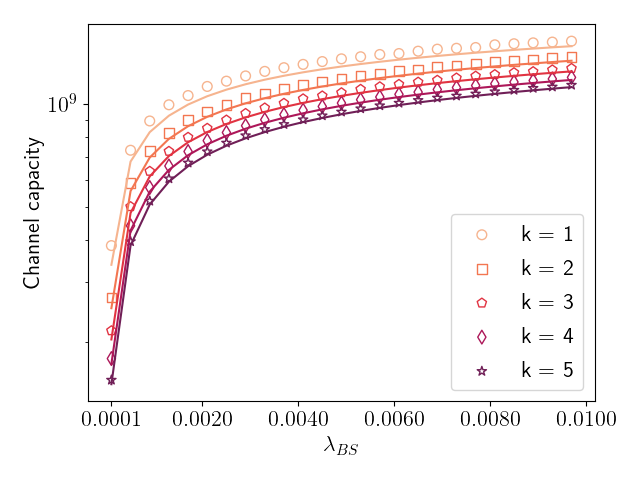}
    \caption{Impact of increasing BS density on the channel capacity. Simulated (markers) and calculated (line) expected channel capacity, $\lu = 0.1$, $\alpha = 2$ and $ c = 10^{3.5}$, on a 1500x1500m area, so $225,000$ users.}
    \label{fig:capacity_nofailure}
\end{figure}

\section{Performance Analysis}\label{sec:perf_analysis}
In this section, we provide a performance analysis of MC using system level Monte-Carlo simulations. Unless stated otherwise, we use the following parameters: $\lbs = 10^{-2}$, $\alpha=2$, $c = 10^{3.5}$ and $\lambda_{U}=0.1$, an area of 1500\,m$\times$1500\,m resulting in 225000 users. 

Fig.~\ref{fig:capacity_nofailure}  shows the channel capacity with increasing BS densities for different values of MC without failures.
The expected channel capacity in this plot is derived from Theorem \ref{thm:expectation_log_snr}. 
We have the following three observations. First, for all BS deployment densities, MC decreases the channel capacity, i.e., single connectivity $k=1$ achieves the highest capacity. 
This observation confirms the validity of Theorem \ref{th:decreasing_snr} which shows that MC decreases the channel capacity. 
Second, the relative performance difference between SC and MC is larger under sparse BS deployments, e.g., $\lambda_{BS}=10^{-4}$. 
That is, for sparse network deployments, increasing MC results in a more significant decrease in per-user throughput. However, in these networks, MC provides benefits in terms of lower outage probability.  
Third, while matching the trend, the analytical values of the channel capacity might deviate from the results obtained from the simulations. 
We attribute this deviation to  the assumption that the degree of a BS $D_{BS}$ and the distance to that BS $R_j$ are independent.
This assumption may not hold in general, as low-degree BSs most likely only serve a small area. Therefore, the distance from a user to that BS is also small. However, for $k \geq 2$, this dependence becomes less prominent as the areas that BSs serve become larger and more equally distributed \cite{stegehuis2021degree}.\\

Now, let us observe how failures affect the per-user throughput and outage probability. 
Figs.~\ref{fig:randomfailure}- \ref{fig:losfailure} depict the channel capacity and outage probability under different failures with/without re-allocation. 
In the figures, solid lines represent the analytical expressions of the channel capacities and outage probabilities. We use \eqref{eq:random_channel} and \eqref{eq:random_outage} for random failures, \eqref{eq:overload_failure} and \eqref{eq:overload_channel} for overload failures, \eqref{eq:distance_channel} and \eqref{eq:distance_outage} for  distance failures, and  \eqref{eq:LoS_channel} and \eqref{eq:los_outage} for  line-of-sight failures. 

\begin{figure*}[thb]
    \subcaptionbox{$\E\left(C^k_{sum}\right)$ - without re-allocation\label{sfig:random_cap}}{\includegraphics[width=.33\textwidth]{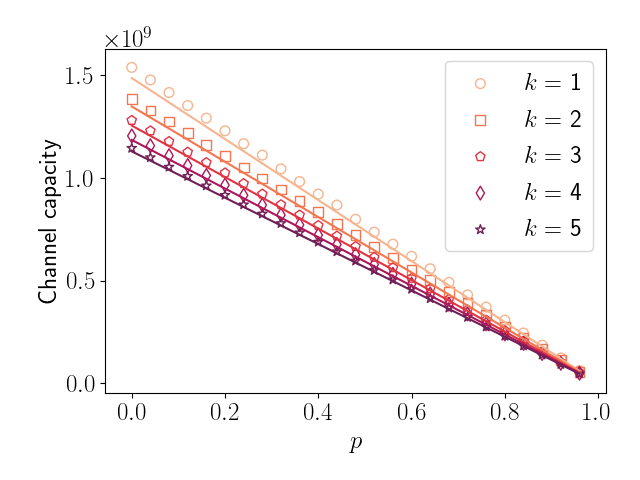}}\hfill
    \subcaptionbox{$\E\left(C^k_{sum}\right)$ - with re-allocation\label{sfig:random_cap_realloc}}{\includegraphics[width=.33\textwidth]{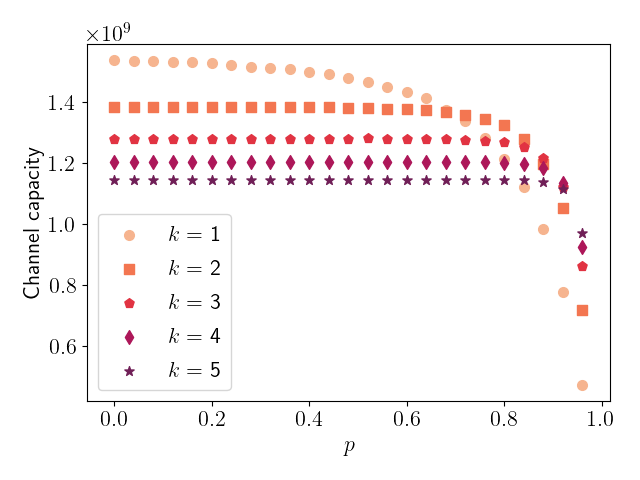}}\hfill
    \subcaptionbox{Outage probability\label{sfig:random_outage}}{\includegraphics[width=.33\textwidth]{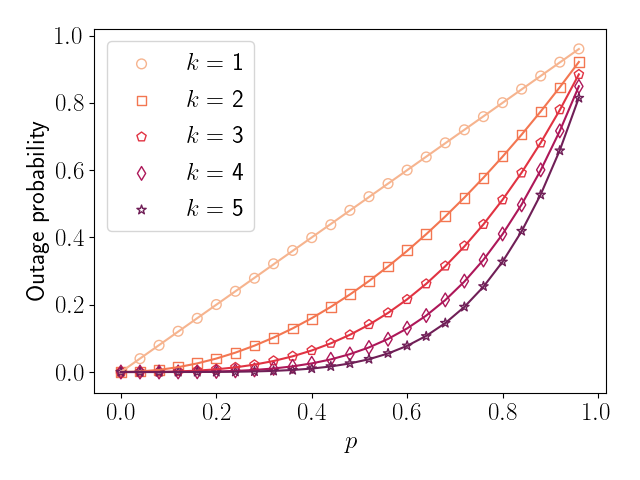}}\hfill
    \caption{Random failures. Impact of outage probability $p$ on simulated (markers) and calculated (line) channel capacity and outage probability. }
    \label{fig:randomfailure}
\end{figure*}

\begin{figure*}[thb]
    \subcaptionbox{$\E\left(C^k_{sum}\right)$ - without re-allocation\label{sfig:overload_cap}}{\includegraphics[width=.33\textwidth]{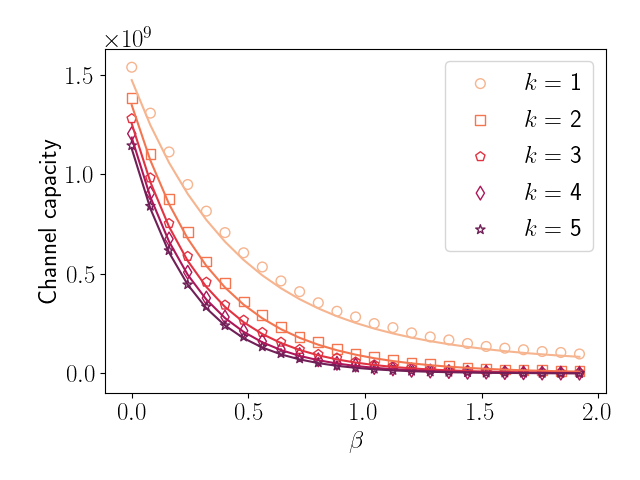}}\hfill
    \subcaptionbox{$\E\left(C^k_{sum}\right)$ - with re-allocation\label{sfig:overload_cap_realloc}}{\includegraphics[width=.33\textwidth]{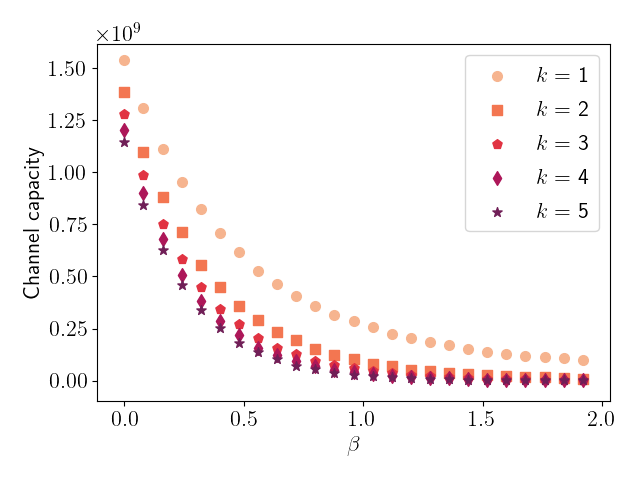}}\hfill
    \subcaptionbox{Outage probability\label{sfig:overload_outage}}{\includegraphics[width=.33\textwidth]{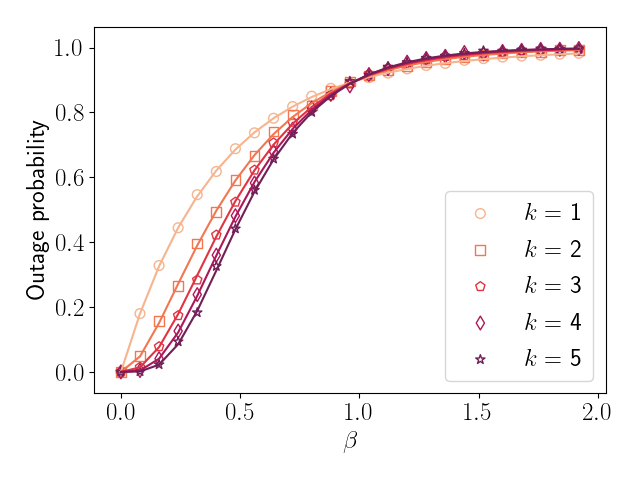}}\hfill
    \caption{Overload failures. Impact of parameter $\beta$ on simulated (markers) and calculated (line) channel capacity and outage probability. }
    \label{fig:overloadfailure}
\end{figure*}

\begin{figure*}[thb]
    \subcaptionbox{$\E\left(C^k_{sum}\right)$ - without re-allocation\label{sfig:distance_cap}}{\includegraphics[width=.33\textwidth]{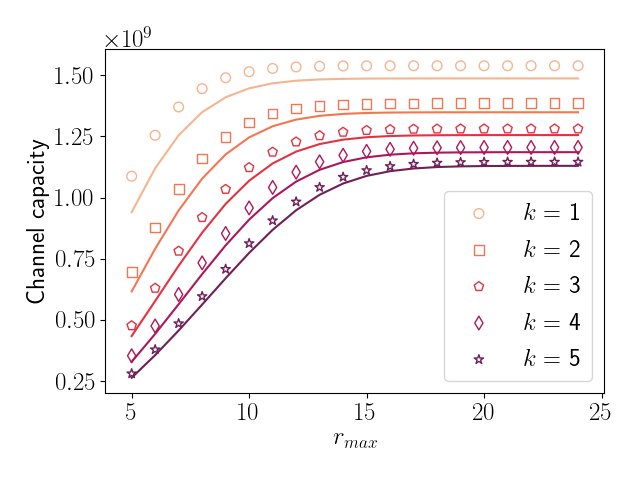}}\hfill
    \subcaptionbox{$\E\left(C^k_{sum}\right)$ - with re-allocation\label{sfig:distance_cap_realloc}}{\includegraphics[width=.33\textwidth]{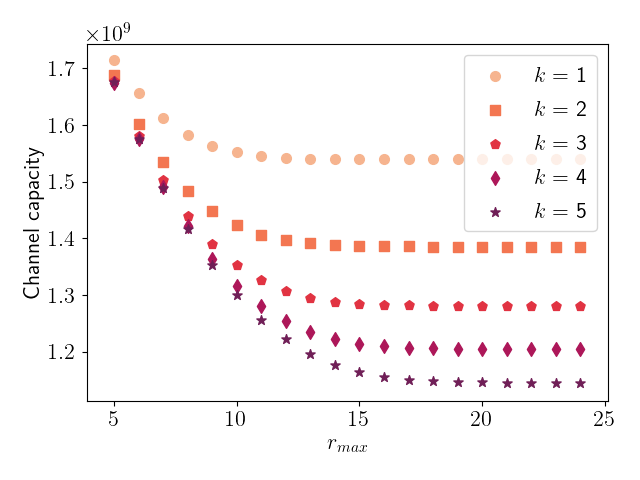}}\hfill
    \subcaptionbox{Outage probability\label{sfig:distance_outage}}{\includegraphics[width=.33\textwidth]{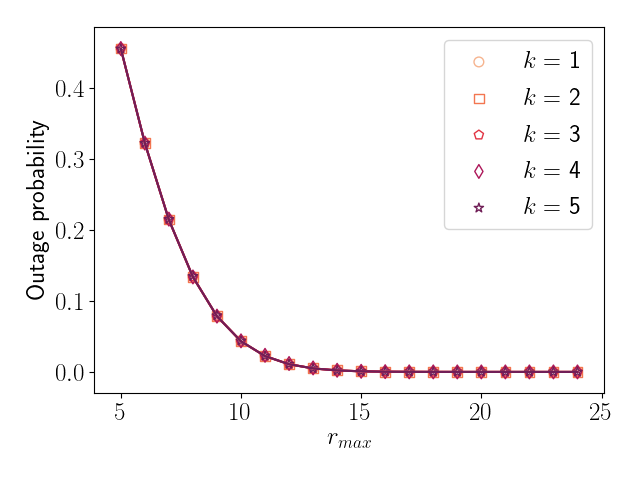}}\hfill
    \caption{Distance failures. Impact of parameter $r_{max}$ on simulated (markers) and calculated (line) channel capacity and outage probability. }
    \label{fig:distancefailure}
\end{figure*}

\begin{figure*}[thb]
    \subcaptionbox{$\E\left(C^k_{sum}\right)$ - without re-allocation\label{sfig:los_cap}}{\includegraphics[width=.33\textwidth]{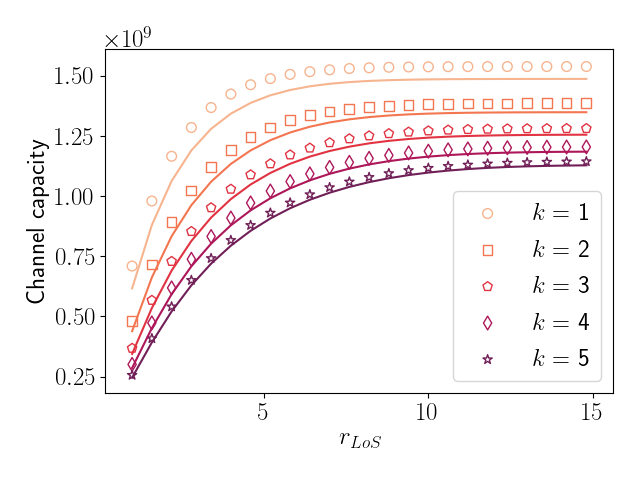}}\hfill
    \subcaptionbox{$\E\left(C^k_{sum}\right)$ - with re-allocation\label{sfig:los_cap_realloc}}{\includegraphics[width=.33\textwidth]{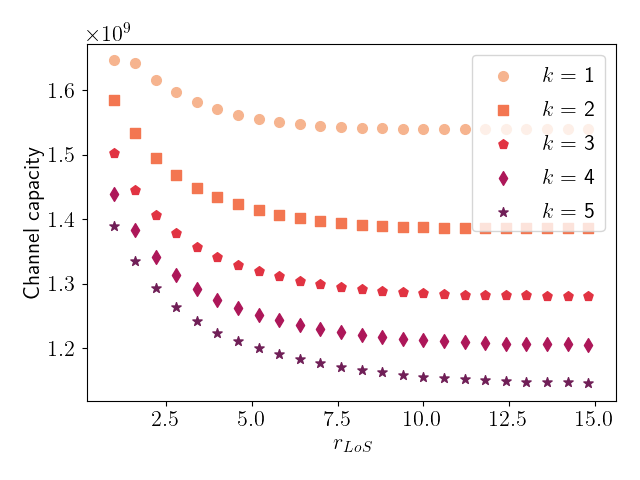}}\hfill
    \subcaptionbox{Outage probability\label{sfig:los_outage}}{\includegraphics[width=.33\textwidth]{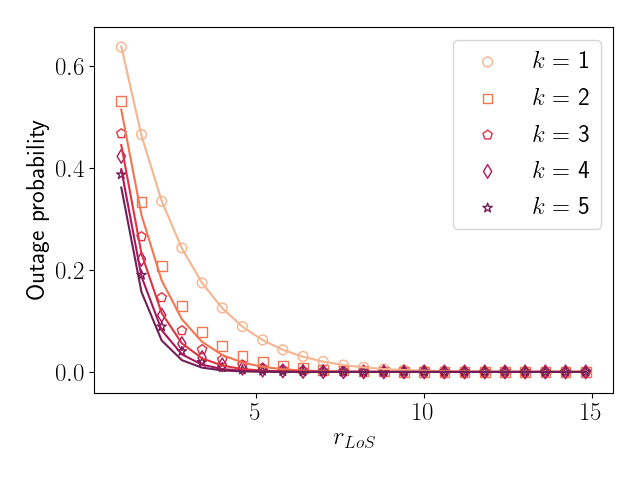}}\hfill
    \caption{Line-of-sight failures. Impact of parameter $\rlos$ on simulated (markers) and calculated (line) channel capacity and outage probability. }
    \label{fig:losfailure}
\end{figure*}

From Figs.\ref{sfig:random_cap}, \ref{sfig:overload_cap}, \ref{sfig:distance_cap}, \ref{sfig:los_cap}, we can see that even under failures, MC decreases the channel capacity of a user. This decrease is due to the fact that even if a user has more links that might survive after a failure, the additional links are with further away BSs. The SNR from more distant BSs are lower compared to the user's first link with the closest BS. Moreover, with increasing MC, bandwidth allocated to each user decreases, which results in lower per-user capacity~(Theorem \ref{th:decreasing_snr}).

A closer look to Fig.\ref{sfig:random_cap_realloc} shows that under high failure probabilities, the channel capacity is higher for larger values of $k$ when the bandwidth reserved for the failed links is re-allocated. 
In this case, SC leads to a lower channel capacity than MC for high $p$ since the bandwidth is divided amongst the links that survive.
%
Since the outage probability for SC is higher~(Fig.\ref{sfig:random_outage}), all users of a BS might experience network outage and therefore bandwidth re-allocation at this BS will not be possible. As a result, available bandwidth of this BS will remain idle, which leads to lower per-user throughput under SC as compared to MC. However, note that typical wireless networks have very robust links due to dense network deployment and advanced resource management schemes~(e.g., robust rate adaptation, error correction) yielding such high $p$ values very unlikely. 



Fig.\ref{fig:overloadfailure} depicts the impact of increasing $\beta$ under various degrees of MC. As we observe in Fig.\ref{sfig:overload_cap_realloc} and Fig. \ref{sfig:overload_cap}, channel capacity does not depend on the re-allocation policy as bandwidth re-allocation is irrelevant when a BS fails. Our key conclusion from the scenario with random failures holds also for overload failures. It is worth mentioning that we observe almost a perfect match between the analytical and simulated values of outage probability in Fig.\ref{sfig:random_outage} and Fig.\ref{sfig:overload_outage}. 
Additionally, considering the improvement in outage probability, the highest improvement from $k$ to $k+1$ connectivity is achieved for dual connectivity. This outage probability decrease enabled by a second link is especially visible for random failures for $p\sim 0.1$ in Fig.\ref{sfig:random_outage} and for $\beta\sim 0.1$ in Fig.\ref{sfig:overload_outage}. While higher $k$ decreases the outage probability further, the gain diminishes. Hence, one can argue that if the link failure probability is expected to be low, dual connectivity can facilitate the highest gains without resulting in a significant capacity loss or overhead to maintain multiple links~(e.g., scheduling coordination among the BSs). 

In Figs.\ref{fig:distancefailure} and \ref{fig:losfailure}, we observe the same impact of MC. However, comparing the capacity with and without re-allocation~(e.g., Fig.\ref{sfig:distance_cap} and Fig.\ref{sfig:distance_cap_realloc}), we observe a different trend. Here, when the capacity is not re-allocated, with increasing $r_{max}$ and $r_{LoS}$, channel capacity increases. In contrast, when re-allocation is possible, the channel capacity decreases. In terms of outage probability, as we see in Fig.\ref{sfig:distance_outage}, MC does not bring any benefits for distance related failures. When the failure is due to the loss of LoS, MC can present gains in terms of reliability but only $r_{LoS}$ smaller than 7.5 meters.


Moreover, Fig.\ref{fig:concentration_cksum} shows that the distribution of $C^k_{sum}$ over all users becomes more concentrated around the mean for larger values of $k$, as the coefficient of variation decreases. Thus, the average performance degrades, but \textit{fairness} increases implying that users maintain similar throughput. To illustrate this, we calculated Jain's fairness index \cite{jain1984quantitative} of the channel capacity as given Figure \ref{fig:concentration_cksum}. This fairness index increases for higher degrees of MC, as for $k = 1$, the fairness is $0.52$ and for $k = 5$ this index is $0.89$. Furthermore, the channel capacity does not decrease for all users: fewer users get a channel capacity of 1 Gbps or less under multi-connected networks in comparison to single-connected networks. Thus, MC degrades the average channel capacity, but increases the channel capacity for the poorly-connected users~(e.g., with lower achieved capacity at the cell-edge).


Finally, we show the minimum and maximum capacity loss between $k = 1$ (SC) and $k > 1$ with and without re-allocation in Table~\ref{tab:capacity_loss}. The minimum and maximum loss is defined as the decrease in throughput in comparison to the SC case. 
From Table~\ref{tab:capacity_loss},  it is clear that the capacity loss becomes larger when the network fails in comparison to no failure, 
while it becomes smaller when bandwidth re-allocation is possible. This result can be explained by the fact that the surviving links will be allocated also the bandwidth of the failed links. 
Notice the negative capacity loss values for random failures in Table~\ref{tab:capacity_loss}. These cases correspond to the scenarios with very high $p$ as explained in Fig.\ref{sfig:random_cap_realloc}.

\renewcommand{\arraystretch}{1.3}

\begin{figure*}
  \begin{minipage}{.55\textwidth}
  \resizebox{\textwidth}{!}{
    \begin{tabular}{|l|cccc|} \hline 
        \textbf{Degree of MC} & \multicolumn{1}{c}{$k = 2$} & \multicolumn{1}{c}{$k = 3$} & \multicolumn{1}{c}{$k = 4$} & \multicolumn{1}{c|}{$k = 5$} \\ \hline
        \textbf{No failure} & $10.0\%$         & $16.8 \%$       & $21.7\%$        & $25.6 \%$       \\
        \hline
        \multicolumn{5}{|c|}{\textbf{Without re-allocation}} \\\hline
        \textbf{Random}                        & $8.7 / 10.8\%$  & $15.5 / 17.4\%$ & $20.7 / 22.2\%$ & $24.1 / 26.1\%$ \\
        \textbf{Overload}                      & $10.0 / 89.0\%$  & $16.8 / 96.3\%$ & $21.7 / 98.7\%$ & $25.6 / 99.2\%$ \\
        \textbf{Distance}                      & $10.0 / 36.0\%$  & $16.8 - 56.2\%$ & $21.7 / 67.5\%$ & $25.6 / 74.3\%$ \\
        \textbf{LoS}                 & $10.0 / 32.6\%$ & $16.8 / 48.4\%$ & $21.7 / 57.9\%$ & $25.6 / 63.9\%$ \\\hline
        \multicolumn{5}{|c|}{\textbf{With re-allocation}} \\\hline
        \textbf{Random}                        & $-52.5 / 10.0\%$  & $-82.8 / 16.8\%$ & $-96.0 / 21.7\%$ & $-105.9 / 25.6\%$ \\
        \textbf{Overload}                      & $10.0 / 89.0\%$  & $16.8 / 96.3\%$ & $21.7 / 98.7\%$ & $25.6 / 99.2\%$ \\
        \textbf{Distance}                      & $1.5 / 10.0\%$  & $2.1 / 16.8\%$ & $2.3 / 21.7\%$ & $2.3 / 25.6\%$ \\
        \textbf{LoS}                 & $3.9 / 10.0\%$ & $8.8 / 16.8\%$ & $12.6 / 21.7\%$ & $15.5 / 25.6\%$ \\\hline
    \end{tabular}%
    }
    \captionof{table}{Minimum and maximum loss in capacity in relation to $k = 1$ for different failures without re-allocation, for $\lbs = 10^{-2}$.}
    \label{tab:capacity_loss}
  \end{minipage} \quad
  \begin{minipage}{.40\textwidth}
    \includegraphics[width = \linewidth]{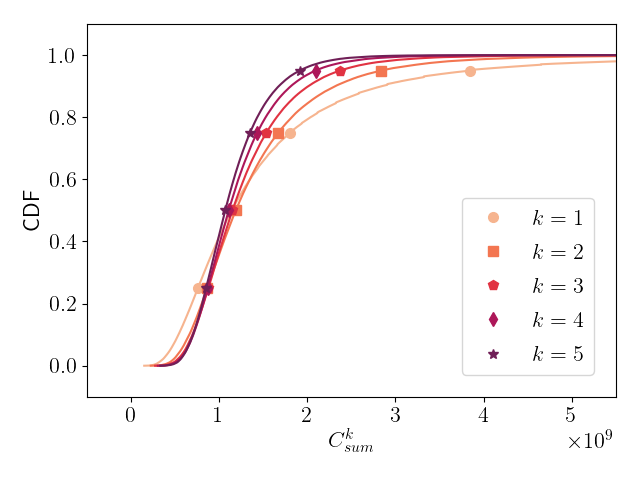}
    \caption{Simulated cumulative distribution of $C^k_{sum}$ for $\lbs = 10^{-2}$, $\lu = 0.1$, $\alpha = 2$ and $c = 10^{3.5}$.}
    \label{fig:concentration_cksum}
  \end{minipage}
\end{figure*}

\section{Discussion}\label{sec:discussion}
Our analysis relies on two key assumptions. First, BSs and users are homogeneously distributed according to a Poisson point process. Second, the BSs are allocated separate channels hence the interference from the adjacent cells is zero. 
While the first assumption is necessary for analytical tractability and it is a typical assumption in the literature, a more realistic model considering heterogeneous PPP can represent the current network deployments better. Regarding the second assumption, this assumption holds under appropriate frequency planning, e.g., with larger cell reuse cluster sizes. An interesting direction is to explore whether the observed trends are also valid in the existence of co-channel interference.



We also assumed that the degree of a BS is independent of the distance between a user and a BS. However, as Fig.~\ref{fig:capacity_nofailure} shows, this may not hold. For further research, one could look into the dependency between these two quantities to find better analytical expressions. However, it can also be seen that for larger values of $k$ this error becomes significantly smaller which means that for MC, this assumption is reasonable. 

We believe that there may be other policies that can make multi-connectivity beneficial to the user, also in terms of an increased channel capacity. An example of this could be a heterogeneous association scheme for MC in which other properties of a user are also considered in MC decision. For instance, cell-edge users can have more connections to prevent outage for this users and to ensure higher throughput.  Similarly, more sophisticated BS association schemes in the literature, e.g., BS load-aware, can be considered rather than the simple policy of connecting to the $k$ closest BSs. Designing policies that ensure that MC is exploited in such a way that all users benefit from it is therefore an important topic for further research. Finally, other interesting directions are the analysis of the overhead of MC in terms of scheduling complexity and message exchange among the BSs serving a user as well as other MC policies such as load balancing MC or packet duplication MC.

\section{Conclusion}\label{sec:conclusion}
In this paper, we have investigated the performance of multi-connected networks in terms of their per-user throughput and outage probabilities, using a model based on stochastic geometry. Interestingly, we observed that the per-user throughput of the network decreases for higher degrees of MC, i.e., when every user connects to multiple BSs. This is in contrast with previous results for the spectral efficiency~\cite{Gapeyenko2019OnDeployments}, which increased for moderate levels of MC. This shows that it is important to take the multi-user perspective into account when designing more efficient methods of obtaining multi-connectivity. While we showed that MC decreases the average channel capacity, we also show that some users still obtain a higher channel capacity under MC compared to single connectivity. 
Future research directions include investigation of different association schemes as well as different MC implementations such as as packet-splitting or load-balancing.

\section*{Acknowledgements}

This work has been supported by University of Twente under the project EERI: Energy-Efficient and Resilient Internet. 

We thank A.J.E.M. Janssen for help with the calculations in Appendix A.

\bibliographystyle{IEEEtran}

\begin{thebibliography}{10}
\bibitem{Simsek20165G-EnabledInternet}
M.~Simsek, A.~Aijaz, M.~Dohler, J.~Sachs, and G.~Fettweis, ``{5G-Enabled
  Tactile Internet},'' \emph{IEEE Journal on Selected Areas in Communications},
  vol.~34, no.~3, pp. 460--473, 3 2016.

\bibitem{Wolf2019HowMulti-Connectivity}
A.~Wolf, P.~Schulz, M.~Dorpinghaus, J.~C.~S. Santos~Filho, and G.~Fettweis,
  ``{How Reliable and Capable is Multi-Connectivity?}'' \emph{IEEE Transactions
  on Communications}, vol.~67, no.~2, pp. 1506--1520, 2 2019.

\bibitem{Suer2020Multi-ConnectivityOverview}
M.~T. Suer, C.~Thein, H.~Tchouankem, and L.~Wolf, ``{Multi-Connectivity as an
  Enabler for Reliable Low Latency Communications - An Overview},'' \emph{IEEE
  Communications Surveys and Tutorials}, vol.~22, no.~1, pp. 156--169, 1 2020.

\bibitem{petrov2017dynamic}
V.~Petrov, D.~Solomitckii, A.~Samuylov, M.~A. Lema, M.~Gapeyenko,
  D.~Moltchanov, S.~Andreev, V.~Naumov, K.~Samouylov, M.~Dohler \emph{et~al.},
  ``Dynamic multi-connectivity performance in ultra-dense urban mmwave
  deployments,'' \emph{IEEE Journal on Selected Areas in Communications},
  vol.~35, no.~9, pp. 2038--2055, 2017.

\bibitem{Gapeyenko2019OnDeployments}
M.~Gapeyenko, V.~Petrov, D.~Moltchanov, M.~R. Akdeniz, S.~Andreev, N.~Himayat,
  and Y.~Koucheryavy, ``{On the degree of multi-connectivity in 5G
  millimeter-wave cellular urban deployments},'' \emph{IEEE Transactions on
  Vehicular Technology}, vol.~68, no.~2, pp. 1973--1978, 2 2019.

\bibitem{ghatak2020elastic}
G.~Ghatak, Y.~Sharma, K.~Zaid, and A.~U. Rahman, ``{Elastic multi-connectivity
  in 5G networks},'' \emph{Physical Communication}, vol.~43, p. 101176, 2020.

\bibitem{sharma2020statistical}
Y.~Sharma and G.~Ghatak, ``{A Statistical Characterization of SINR Coverage and
  Network Throughput with Macro-Diversity},'' in \emph{2020 IEEE 21st
  International Symposium on" A World of Wireless, Mobile and Multimedia
  Networks"(WoWMoM)}.\hskip 1em plus 0.5em minus 0.4em\relax IEEE, 2020, pp.
  197--204.

\bibitem{wolf2018rate}
A.~Wolf, P.~Schulz, D.~{\"O}hmann, M.~D{\"o}rpinghaus, and G.~Fettweis,
  ``{Rate-reliability tradeoff for multi-connectivity},'' in \emph{2018 IEEE
  Wireless Communications and Networking Conference (WCNC)}.\hskip 1em plus
  0.5em minus 0.4em\relax IEEE, 2018, pp. 1--6.

\bibitem{perdomo2020user}
J.~Perdomo, M.~Ericsson, M.~Nordberg, and K.~Andersson, ``{User Performance in
  a 5G Multi-connectivity Ultra-Dense Network City Scenario},'' in \emph{2020
  IEEE 45th Conference on Local Computer Networks (LCN)}.\hskip 1em plus 0.5em
  minus 0.4em\relax IEEE, 2020, pp. 195--203.

\bibitem{gerasimenko2019capacity}
M.~Gerasimenko, D.~Moltchanov, M.~Gapeyenko, S.~Andreev, and Y.~Koucheryavy,
  ``{Capacity of multiconnectivity mmWave systems with dynamic blockage and
  directional antennas},'' \emph{IEEE Transactions on Vehicular Technology},
  vol.~68, no.~4, pp. 3534--3549, 2019.

\bibitem{pirmagomedov2019performance}
R.~Pirmagomedov, D.~Moltchanov, V.~Ustinov, M.~N. Saqib, and S.~Andreev,
  ``{Performance of mmwave-based mesh networks in indoor environments with
  dynamic blockage},'' in \emph{International Conference on Wired/Wireless
  Internet Communication}.\hskip 1em plus 0.5em minus 0.4em\relax Springer,
  2019, pp. 129--140.

\bibitem{suer2020evaluation}
M.-T. Suer, C.~Thein, H.~Tchouankem, and L.~Wolf, ``{Evaluation of
  Multi-Connectivity schemes for URLLC traffic over WiFi and LTE},'' in
  \emph{2020 IEEE Wireless Communications and Networking Conference
  (WCNC)}.\hskip 1em plus 0.5em minus 0.4em\relax IEEE, 2020, pp. 1--7.

\bibitem{suer2020impact}
------, ``{Impact of Link Heterogeneity and Link Correlation on
  Multi-Connectivity Scheduling Schemes for Reliable Low-Latency
  Communication},'' in \emph{2020 IEEE International Conference on
  Communications Workshops (ICC Workshops)}.\hskip 1em plus 0.5em minus
  0.4em\relax IEEE, 2020, pp. 1--6.

\bibitem{suer2020reliability}
------, ``{Reliability and Latency Performance of Multi-Connectivity Scheduling
  Schemes in Multi-User Scenarios},'' in \emph{2020 32nd International
  Teletraffic Congress (ITC 32)}.\hskip 1em plus 0.5em minus 0.4em\relax IEEE,
  2020, pp. 73--80.

\bibitem{ElSawy2017ModelingTutorial}
H.~ElSawy, A.~Sultan-Salem, M.~S. Alouini, and M.~Z. Win, ``{Modeling and
  Analysis of Cellular Networks Using Stochastic Geometry: A Tutorial},''
  \emph{IEEE Communications Surveys and Tutorials}, vol.~19, no.~1, pp.
  167--203, 1 2017.

\bibitem{shannon2001mathematical}
C.~E. Shannon, ``A mathematical theory of communication,'' \emph{ACM SIGMOBILE
  mobile computing and communications review}, vol.~5, no.~1, pp. 3--55, 2001.

\bibitem{stegehuis2021degree}
C.~Stegehuis and L.~Weedage, ``Degree distributions in {AB} random geometric
  graphs,'' 2021.

\bibitem{jain1984quantitative}
R.~K. Jain, D.-M.~W. Chiu, W.~R. Hawe \emph{et~al.}, ``A quantitative measure
  of fairness and discrimination,'' \emph{Eastern Research Laboratory, Digital
  Equipment Corporation, Hudson, MA}, 1984.

\bibitem{Amore2005AsymptoticFunction}
P.~Amore, ``{Asymptotic and exact series representations for the incomplete
  Gamma function},'' \emph{EPL (Europhysics Letters)}, Vol.~71, no.~1, pp. 1--7, 2005.

\bibitem{geddes1990evaluation}
K.~O. Geddes, M.~L. Glasser, R.~A. Moore, and T.~C. Scott, ``Evaluation of
  classes of definite integrals involving elementary functions via
  differentiation of special functions,'' \emph{Applicable Algebra in
  Engineering, Communication and Computing}, vol.~1, no.~2, pp. 149--165, 1990.

\bibitem{volkmer2014note}
H.~Volkmer and J.~J. Wood, ``A note on the asymptotic expansion of generalized
  hypergeometric functions,'' \emph{Analysis and Applications}, vol.~12,
  no.~01, pp. 107--115, 2014.

\end{thebibliography}

\appendices
\section{Proof of Theorem 1}\label{apx:proof_theorem1}
\begin{proof}
We want to find the expectation of the logarithm of $SNR_j$ for $j \in [k]$, which is the following integral:
\begin{IEEEeqnarray}{rCl}
    \E(\log_2(1+SNR_j)) &=& \frac{1}{\ln(2)}\int_0^\infty \ln(1+x) f_{SNR_j}(x) \dx \nonumber \\
    &=& \frac{1}{\ln(2)}\left( I_1 + I_2 + I_3\right) + R, \label{eq:expectation_log_snr}
\end{IEEEeqnarray}
where $R = \log_2(1+c)\left(1 - \frac{\Gamma(k, \lbs\pi)}{\Gamma(k + 1)} \right)$, the point mass in $x = c$, $f_{SNR_j}(x)$ is given in \eqref{eq:snr_pdf} and $I_1$, $I_2$ and $I_3$ are defined as:
\begin{IEEEeqnarray}{rCl}
    I_1 &=& \int_0^1 \ln(1+x) f_{SNR_j}(x)\dx, \label{eq:I1}\\
    I_2 &=& \int_1^c \ln\left(1+\frac{1}{x}\right)f_{SNR_j}(x)\dx, \label{eq:I2}\\
    I_3 &=& \int_1^c \ln(x)f_{SNR_j}(x)\dx \label{eq:I3}
\end{IEEEeqnarray}

To simplify notation, we define $\phi = \lbs \pi c^{\frac{2}{\alpha}}$. To find an expression for $I_1$ and $I_2$, we use a Taylor expansion for the logarithm:
\begin{IEEEeqnarray}{rCll}
    \ln(1+x) = \begin{cases}
        \sum_{i=0}^\infty \frac{(-1)^i}{i + 1} x^{i+1},  &0 \leq x \leq 1\\
        \ln(x) + \sum_{i=0}^\infty \frac{(-1)^i }{i+1}x^{-(i+1)}, & x > 1
    \end{cases}\nonumber
\end{IEEEeqnarray}
Therefore,
\begin{IEEEeqnarray}{rCl}
    I_1 &=& \frac{2\phi^j}{\alpha \Gamma(j)} \int_0^1 \ln(1+x) x^{-\frac{2j}{\alpha}-1} e^{-\phi x^{-\frac{2}{\alpha}}} \dx \nonumber \\
    &=& \frac{2\phi^j}{\alpha \Gamma(j)} \sum_{i=0}^\infty \frac{(-1)^i}{i + 1} \int_0^1 x^{-\frac{2j}{\alpha} + i} e^{-\phi x^{-\frac{2}{\alpha}}} \dx.
\end{IEEEeqnarray}
We use the change of variables $y = \phi x^{-\frac{2}{\alpha}}$:
\begin{IEEEeqnarray}{rCl}
    I_1 &=& \frac{1}{\Gamma(j)} \sum_{i=0}^\infty \frac{(-1)^i\phi^{\frac{\alpha}{2}(i + 1)}}{i + 1} \int_{\phi}^{\infty} y^{-\frac{\alpha}{2}(i + 1) + j - 1} e^{-y} \dy \nonumber \\
    &=&  \frac{1}{\Gamma(j)} \sum_{i=0}^\infty \frac{(-1)^i\phi^{\frac{\alpha}{2}(i + 1)}}{i + 1}  \Gamma\left(-\frac{\alpha}{2}(i+1) + j, \phi\right).\IEEEeqnarraynumspace \label{eq:i1}
\end{IEEEeqnarray}

For $I_2$, we use the same approach:
\begin{IEEEeqnarray}{rCl}
    I_2 &=& \frac{2\phi^j}{\alpha \Gamma(j)} \int_1^c \ln\left(1+\frac{1}{x}\right) x^{-\frac{2}{\alpha}-1} e^{-\phi x^{-\frac{2}{\alpha}}} \dx \nonumber \\
    &=& \frac{2\phi^j}{\alpha \Gamma(j)} \sum_{i=0}^\infty \frac{(-1)^i}{i + 1} \int_1^c x^{-\frac{2}{\alpha} - i - 2} e^{-\phi x^{-\frac{2}{\alpha}}} \dx\nonumber \\
    &=& \frac{1}{\Gamma(j)} \sum_{i=0}^\infty \frac{(-1)^i\phi^{-\frac{\alpha}{2}(i+1)}}{i + 1} \int_{\lbs \pi  }^{\phi} y^{\frac{\alpha}{2}(i+1) + j -1} e^{-y} \dy \nonumber \\
    &=&  \frac{1}{\Gamma(j)} \sum_{i=0}^\infty \frac{(-1)^i\phi^{-\frac{\alpha}{2}(i+1)}}{i + 1} \left(\Gamma\left(\frac{\alpha}{2}(1+i) +j, \lbs \pi \right) {-}  \Gamma\left(\frac{\alpha}{2}(1+i) + j, \phi\right)\right),\IEEEeqnarraynumspace \label{eq:i2}
\end{IEEEeqnarray}
with the same change of variables as we used for $I_1$.

For $I_3$ we use a slightly different approach:
\begin{IEEEeqnarray}{rCl}
    I_3 &=& \frac{2\phi^j}{\alpha \Gamma(j)} \int_1^c \ln(x) x^{-\frac{2j}{\alpha}-1} e^{-\phi x^{-\frac{2}{\alpha}}} \dx \nonumber \\
    &=& \frac{\alpha}{2\Gamma(j)} \int_{\lbs \pi}^{ \phi} \ln\left(\frac{\phi}{y}\right) y^{j-1} e^{-y} \dy \nonumber \\
    &=& \frac{\alpha}{2\Gamma(j)} \int_{\lbs \pi}^{ \phi} \left(\ln(\phi) - \ln(y)\right)y^{j-1} e^{-y} \dy \nonumber \\
    &=& \frac{\alpha}{2\Gamma(j)}\Bigg(\ln(\phi)\left(\Gamma(j, \lbs \pi) - \Gamma(j, \phi)\right) - \int_{\lbs \pi}^{ \phi}\ln(y) y^{j-1}e^{-y} \dy \Bigg) \nonumber\\
    &=& \frac{\alpha}{2\Gamma(j)}\Bigg(\ln(\phi)\left(\Gamma(j, \lbs \pi)- \Gamma(j, \phi)\right) - \frac{\text{d}}{\text{d}a} \left[ \Gamma(a, \phi) - \Gamma(a, \lbs \pi)\right]_{a = j} \Bigg), \label{eq:i3}
\end{IEEEeqnarray}
with the same changes of variables as in $I_1$ and $I_2$ and we can derive the last equation since this integrand is the derivative of the incomplete gamma function.

In \eqref{eq:elogsnr}, we filled in \eqref{eq:expectation_log_snr} with Equations \eqref{eq:i1}, \eqref{eq:i2} and \eqref{eq:i3}, with $G(j, \phi)$ as defined in \eqref{eq:defineG}, which gives the result we wanted to prove.
\begin{IEEEeqnarray}{rCl}
    \E(\log_2(1+SNR_j)) &=& \frac{1}{\ln(2)\Gamma(j)} G(j, \phi) +\log_2(1+c) \left(1 - \frac{\Gamma(j, \lbs \pi)}{\Gamma(j + 1)}\right) \nonumber \\
    &&+ \frac{\alpha}{2 \ln(2) \Gamma(j )}\bigg(\ln(\phi)\left(\Gamma(j, \lbs \pi) - \Gamma(j, \phi)\right) - \frac{\text{d}}{\text{d}a} \left[ \Gamma(a, \lbs \pi) - \Gamma(a, \phi)\right]_{a = j} \bigg).\label{eq:elogsnr}\IEEEeqnarraynumspace
\end{IEEEeqnarray}
    \begin{IEEEeqnarray}{rCll}
    G(j, \phi) &=& \sum_{i=0}^\infty \frac{(-1)^i\phi^{\frac{\alpha}{2}(i+1)}}{i + 1} \Gamma\left(\frac{-\alpha}{2}(i{+}1) {+} j, \phi\right) \nonumber \\
    &&{+} \frac{(-1)^i\phi^{-\frac{\alpha}{2}(i+1)}}{i + 1}  \bigg(\Gamma\left(\frac{\alpha}{2}(1{+}i) {+} j, \lbs \pi \right) {-}  \Gamma\left(\frac{\alpha}{2}(1{+}i) {+} j, \phi\right)\bigg).\IEEEeqnarraynumspace\label{eq:defineG}
\end{IEEEeqnarray}
\end{proof}

\subsection{Approximation}
In high-SNR regime, we can approximate \eqref{eq:expectation_log_snr} by omitting $I_1$ and $I_2$, as $f_{SNR}$ will be small between 0 and 1 and $\ln(1+SNR^{-1})$ will be as well. Figure \ref{fig:I1I2I3R} shows that indeed $I_1$ and the rest $R$, as defined in \eqref{eq:I1} - \eqref{eq:I3} are dominant with respect to $I_1$ and $I_2$ for large values of $\lbs$, the approximation $I_3 + R$ would work well.

\begin{figure}[h!]
\begin{subfigure}{0.5\textwidth}
    \includegraphics[width = 0.9\textwidth]{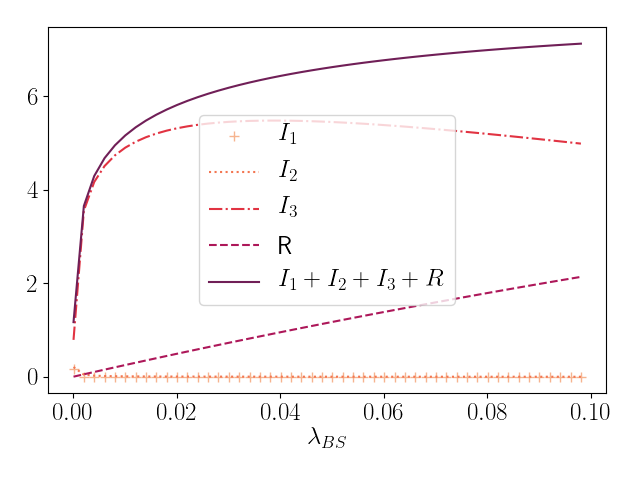}
    \caption{$j=1$}
    \label{sfig:sumIntegral_alpha2_k1}
\end{subfigure}
\hfill
\begin{subfigure}{0.5\textwidth}
    \includegraphics[width=0.9\textwidth]{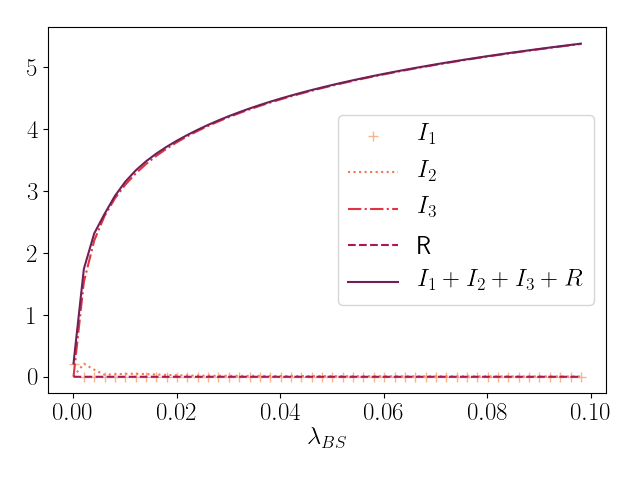}
    \caption{$j=5$}
    \label{sfig:sumIntegral_alpha2_k5}
\end{subfigure}
\caption{$\E(\log_2(1+SNR_j)$ for different values of $j$, as a sum of $I_1, I_2, I_3$ and $R$ with $c = 1.28 \cdot 10^4$ and $\alpha = 2$.}
\label{fig:I1I2I3R}
\end{figure}
We start with the approximation of $I_3$, as defined in \eqref{eq:i3}. We can express the incomplete gamma function $\Gamma(s, x)$ for $s \in \mathbb{N}$ as~\cite{Amore2005AsymptoticFunction}:
\begin{IEEEeqnarray}{rCl}
    \Gamma(s,x) = \Gamma(s)e^{-x}\sum_{i=0}^{s-1}\frac{x^i}{i!}, \label{eq:approximation_gamma}
\end{IEEEeqnarray}
which holds for $|x| < 1$, while
for large values of $x$, the incomplete gamma function tends to zero.

For the second term of $I_3$ in~\eqref{eq:i3}, we need to find the approximation of the derivative of the gamma function, which is given in \cite{geddes1990evaluation}:
\begin{IEEEeqnarray}{rCl}
    \frac{\text{d}}{\ds} \Gamma(s, x) &=& \ln(x)\Gamma(s,x) - x \frac{\text{d}}{\dt}\Gamma(s-t)x^{t-1}\Big|_{t=0} + \sum_{i=0}^\infty \frac{(-1)^i x^{s+i}}{i!(s+i)^2}\nonumber \\
    &=& \ln(x) \Gamma(s,x) - \Gamma(s)(\ln(x) + \gamma - H_{s-1}) + \sum_{i=0}^\infty \frac{(-1)^i x^{s+i}}{i!(s+i)^2}, \label{eq:approximate_derivative_gamma}
\end{IEEEeqnarray}
where $H_{s} = \sum_{i=1}^s \frac{1}{i}$ is the harmonic number.\\

We can now fill in \eqref{eq:i3} using \eqref{eq:approximation_gamma} and \eqref{eq:approximate_derivative_gamma}:
\begin{IEEEeqnarray}{rCl}
    I_3 &=& \frac{\alpha}{2}\Bigg( \big(\ln(\phi) -\ln(\lbs \pi)\big)\Bigg(e^{-\lbs \pi}\sum_{i=0}^{j-1}\frac{(\lbs \pi)^i}{i!} - 1\Bigg) {+} \sum_{i=0}^\infty \frac{(-1)^i (\lbs \pi)^{j+i}}{i!(j+i)^2\Gamma(j)} {-}  \sum_{i=0}^\infty \frac{(-1)^i \phi^{j+i}}{i!(j+i)^2\Gamma(j)} \Bigg) \nonumber \\
    &=&  \ln(c)\left(e^{-\lbs \pi}\sum_{i=0}^{j-1}\frac{(\lbs \pi)^i}{i!} - 1\right) + \frac{\alpha}{2 \Gamma(j)} \sum_{i=0}^\infty \frac{(-1)^i \left((\lbs \pi)^{j+i} - \phi^{j+i}\right)}{i!(j+i)^2} .\label{eq:I3exact}
\end{IEEEeqnarray}

Assuming $\lbs \pi < 1$, we can approximate \eqref{eq:I3exact} for $j > 1$:
\begin{IEEEeqnarray}{rCl}
    I_3 &=& 
        \ln(c)\left(e^{-\lbs \pi}\left(1+\lbs \pi + O\left((\lbs \pi)^2\right)\right) - 1\right)- \frac{\alpha}{2\Gamma(j)}\sum_{i=0}^\infty \frac{(-1)^i \phi^{j+i}}{i!(j+i)^2\Gamma(j)}\cdot \left(1 + O\left(c^{-\frac{2j}{\alpha}}\right)\right) \nonumber \\
        &=& 
        \ln(c)\left(e^{-\lbs \pi}\left(1{+}\lbs \pi {+} O\left((\lbs\pi)^2\right)\right) - 1\right) {+}\frac{\alpha \phi^j}{2j^2\Gamma(j)}\, _2F_2(\{j, j\}, \{1{+}j, 1{+}j\}, {-}\phi) \Big(1 {+} O\Big(c^{-\frac{2j}{\alpha}}\Big)\Big),\ruimte\label{eq:I3_approximation_kgreater1}
\end{IEEEeqnarray}
where $_2F_2(a, b, x)$ is the hypergeometric function that we can approximate using Theorem 3.1 of \cite{volkmer2014note}:
\begin{IEEEeqnarray}{rCl}
    _2F_2(\{j, j\}, \{1+j, 1+j\}, -\phi) &=& \frac{\left(\Gamma(j+1)\right)^2}{\left(\Gamma(j)\right)^2}\int_0^1 e^{-\phi y}(1-y)y^{j-1}\,_2F_1\left(\{1,1\}, \{2\}, 1{-}y\right) \dy \nonumber \\
    &=& \frac{j^2}{\phi^j}\left( \Gamma(j)\left(\ln(\phi) + \gamma - H_{j-1}\right) +  G_{2,3}^{3,0}\left(\phi\left|
    \begin{array}{c}
     1,1 \\
     0,0,j \\
    \end{array}
    \right.\right)\right) \nonumber \\
    &=& \frac{j^2 \Gamma(j)}{\phi^j}  \left(\ln(\phi) + \gamma - H_{j-1}\right) + O\left( j^2 \phi^{-2j}\right),\ruimte \label{eq:Hypergeometric_approximation}
\end{IEEEeqnarray}
where $G_{2,3}^{3,0}\left(\phi\left|
    \begin{array}{c}
     1,1 \\
     0,0,j \\
    \end{array}\right.\right)$ is the Meijer-G function, which is always smaller than $\phi^{-j}$.

The approximation in \eqref{eq:Hypergeometric_approximation} can be filled in in \eqref{eq:I3_approximation_kgreater1} to obtain a final approximation of $I_3$ for $j > 1$:
\begin{IEEEeqnarray}{rCl}
    I_3 &=& \ln(c)\left(e^{-\lbs \pi}\left(1+\lbs \pi + O\left((\lbs\pi)^2\right)\right) - 1\right) \nonumber \\
    &&+ \frac{\alpha}{2} (\ln(\phi)  {+} \gamma {-} H_{j-1})  \Big(1 {+} O\Big(c^{-\frac{2j}{\alpha}}\Big)\Big) {+} O\Big(\Gamma(j)^{-1} c^{-\frac{2j}{\alpha}}\phi^{-j}\Big) \ruimte\label{eq:I_3_approximation_kgeq1}
\end{IEEEeqnarray}

For $j = 1$, $I_3$ can directly be calculated from \eqref{eq:I3}:
\begin{IEEEeqnarray}{rCl}
    I_3 &=& \ln(c) e^{-\lbs \pi}  + \frac{\alpha}{2}\left(\Gamma(0, \phi) - \Gamma(0, \lbs \pi)\right) \nonumber \\
     &=& \ln(c) e^{-\lbs \pi} + \frac{\alpha}{2}\left(-\frac{1}{\phi} e^{-\phi} + \ln(\lbs \pi) + \gamma -\lbs \pi\right) + O\left(\phi^{-1}\right) + O\left((\lbs \pi)^2\right),\label{eq:I3_approximation_k1}
\end{IEEEeqnarray}
using the series expansion around $0$ and around $\infty$ for $\Gamma(0, x)$.\\

As $\E(\log_2(1+SNR_j)) = \frac{1}{\ln(2)} (I_1 + I_2 + I_3) + R$, we still need to approximate $I_1, I_2$ and $R$. To approximate $R$, we again approximate $\Gamma(k, x)$ by the first 2 terms of \eqref{eq:approximation_gamma}:
\begin{IEEEeqnarray}{rCl}
    R = \begin{cases}
        \log_2(c)\big(1 {-} e^{-\lbs \pi}\big) + O\left(c^{-1}\log_2(c)\right), \quad j = 1,\\
        \log_2(c)\big(1 {-} e^{-\lbs \pi}(1{+}\lbs \pi {+} O\big((\lbs \pi)^2\big) {+} O\big(c^{-1}\big)\big),\label{eq:approximation_R}
    \end{cases}
\end{IEEEeqnarray}
as the second equation holds for $j > 1$. For $I_1$ and $I_2$, we find an upper bound as Figure \ref{fig:I1I2I3R} shows these integrals are negligible in high-SNR regime:
\begin{IEEEeqnarray}{rCl}
    I_1 &\leq& \ln(2) \cdot \P(SNR_j \leq 1) = \ln(2) \frac{\Gamma(j, \phi)}{\Gamma(j)} = O\left(e^{-\phi}\phi^j\right), \IEEEeqnarraynumspace\label{eq:I1_approx}\\
    I_2 &\leq& \ln\left(1+\frac{1}{\E(SNR_j)}\right) = O\left(\E(SNR_j^{-1})\right).\label{eq:I2_approx}
\end{IEEEeqnarray}

Now, we approximate $\E(\log_2(1+SNR_j))$ by using \eqref{eq:I3_approximation_k1} and \eqref{eq:I_3_approximation_kgeq1} together with \eqref{eq:approximation_R} - \eqref{eq:I2_approx}:
\begin{IEEEeqnarray}{rCl}
    \E(\log_2(1+SNR_1)) &=& \frac{\alpha}{2 \ln(2)}\left(\ln(\phi) + \gamma - \lbs \pi \right)  + \delta_1,\IEEEeqnarraynumspace\label{eq:approx_snr_k1}\\
    \E(\log_2(1+SNR_j)) &=& \frac{\alpha }{2 \ln(2)} \left(\ln(\phi) + \gamma - H_{j-1}\right) + \delta_j, \IEEEeqnarraynumspace\label{eq:approx_snr}
\end{IEEEeqnarray}
for $j > 1$ and where $\gamma$ is Euler's constant and for the following variables $\delta_1$ and $\delta_j$:
\begin{IEEEeqnarray}{rCl}
    \delta_1 &=&  O\left( \phi^{-1}\right) {+} O\left( (\lbs \pi)^2\right) {+} O\left(\ln(c) c^{-1}\right) {+} O\left(\E(SNR_j^{-1})\right)\nonumber\\
    \delta_j &=& O\left(c^{-\frac{2j}{\alpha}} \ln(\phi)\right) {+} O\left(\Gamma(j)^{-1} c^{-\frac{2j}{\alpha}}\phi^{-j}\right) {+} O\left(\ln(c) c^{-1}\right)+ O\left(\E(SNR_j^{-1})\right),\ruimte \nonumber
\end{IEEEeqnarray}
where the latter holds for $j > 1$.

\end{document}